\documentclass[journal]{new-aiaa}
\usepackage[utf8]{inputenc}

\usepackage{graphicx}
\usepackage{epstopdf}
\usepackage{float}

\usepackage{amsmath,amsthm,bm,bbm,latexsym,subcaption,hyperref}
\usepackage[version=4]{mhchem}
\usepackage[english]{babel}
\usepackage[title]{appendix}
\usepackage[pagewise]{lineno}
\usepackage{siunitx}
\usepackage{longtable,tabularx}
\usepackage[ruled,linesnumbered]{algorithm2e}

\usepackage{etoolbox}% for \forcsvlist

\usepackage[dvipsnames]{xcolor}
\graphicspath{{./figures/}} % LM

\DeclareGraphicsExtensions{.pdf,.eps}

\newcommand{\dfn}{\triangleq}

\DeclareMathOperator{\sign}{sgn}
\DeclareMathOperator{\sat}{sat}

\newcommand{\meas}{\ensuremath{\mathcal{Z}}}

\let\abs=\envert

\newtheorem{theorem}{Theorem}[section]

\newtheorem{lemma}[theorem]{Lemma}

\newtheorem{definition}{Definition}

\newtheorem{remark}{Remark}

\newcommand{\ProbM}[2]{P_M[#1,#2]}

\newcommand{\DeclareMyOperator}[1]{%
  \expandafter\DeclareMathOperator\csname #1\endcsname{#1}
}
\forcsvlist{\DeclareMyOperator}{%
   supp,
}

\newcommand{\Mcr}{\mathbb{M}^c_r}

\newif\iftracked
\newif\ifsupp
\newif\ifinclude
\trackedfalse \supptrue \includefalse   % (iii) arXiv version (clean text, with supplementary appendices)
\iftracked
  \newcommand{\rev}[1]{\textcolor{blue}{#1}}
\else
  \newcommand{\rev}[1]{#1}
\fi
\title{Comprehensive Approach to Directly Addressing Estimation
  Delays in Stochastic Guidance}

\author{Liraz Mudrik\footnote{\rev{This work is part of Dr. Mudrik’s Doctoral Research, Stephen B. Klein Faculty of Aerospace Engineering; currently Postdoctoral Fellow, Department of Mechanical and Aerospace Engineering, Naval Postgraduate School, Monterey, CA 93943; liraz109@gmail.com. Member AIAA (Corresponding Author).}}}

\author{Yaakov Oshman\footnote{Professor Emeritus, Stephen B.\ Klein
    Faculty of Aerospace Engineering;
    \texttt{yaakov.oshman@technion.ac.il}. Fellow AIAA.}}
\affil{Technion-Israel Institute of Technology, Haifa, 3200003,
  Israel}

\begin{document}

\iftracked
    \linenumbers
\fi

\maketitle

\ifinclude\else
    \renewcommand{\include}{\input}
\fi

\begin{abstract}
In realistic pursuit–evasion scenarios, abrupt target maneuvers generate unavoidable periods of elevated uncertainty that result in estimation delays.  
Such delays can degrade interception performance to the point of causing a miss. 
Existing delayed-information guidance laws fail to provide a complete remedy, as they typically assume constant and known delays.  
Moreover, in practice they are fed by filtered estimates, contrary to these laws' foundational assumptions.  
We present an overarching strategy for tracking and interception that explicitly accounts for time-varying estimation delays.  
We first devise a guidance law that incorporates two time-varying delays, thereby generalizing prior deterministic formulations.  
This law is driven by a particle-based fixed-lag smoother that provides it with appropriately delayed state estimates.  
Furthermore, using semi-Markov modeling of the target’s maneuvers, the delays are estimated in real-time, enabling adaptive adjustment of the guidance inputs during engagement. 
The resulting framework consistently conjoins estimation, delay modeling, and guidance.  
Its effectiveness and superior robustness over existing delayed-information guidance laws are demonstrated via an extensive Monte Carlo study.
\end{abstract}

\section{Introduction}
% PI GL - DGL1
Pursuit–evasion differential games have long served as a cornerstone
for developing guidance laws.  In linearized deterministic settings
with bounded controls, these games yield optimal strategies known as
the differential game guidance laws (DGLs)~\cite{gutman_optimal_1979}, e.g.,
DGL0 and DGL1.  In particular, the DGL1 law guarantees hit-to-kill
performance when the pursuer enjoys sufficient maneuverability and
agility advantages~\cite{shinar_solution_1981}.  However, these
results fundamentally rely on the assumption of perfect and
instantaneous state information for both players.  Realistic
interception scenarios violate this assumption, as sensing is noisy,
incomplete, and inherently delayed, rendering the deterministic
performance guarantees unattainable in practice.

% Estimation Process - Time Delay
In realistic stochastic settings, noisy and incomplete sensor
measurements force the pursuer to infer the evader's state through an
estimator.  When a highly maneuverable evader executes an abrupt
change in its acceleration command, the estimator exhibits an
inevitable detection delay before resolving the new maneuver.  This
delay creates an \emph{uncertainty interval} following each abrupt
maneuver, during which the estimated state cannot yet reflect the
evader's true behavior.  Within this interval, we refer to the
estimate as \emph{delayed}, implying that it is associated with a past
state, preceding the target's evasion maneuver.  A well-timed
bang-bang evasion maneuver can, therefore, induce substantial miss
distance, even against advanced guidance laws such as
DGL1~\cite{shaferman_stochastic_2016}, and a sophisticated evader can
even deliberately exploit this
effect~\cite{shaferman_near-optimal_2021}.  Counteracting the adverse
effects of the estimation delay obviously requires estimating it, in
real time.  A crude, off-line estimate of the estimation delay was
developed, based on first-principles, in~\cite{hexner_temporal_2008}
for linearized dynamics with additive Gaussian noise; however, to the
best of the authors' knowledge, no method currently exists in the
guidance literature for estimating this delay in real time from actual
measurement data acquired during the engagement.

% Delayed PI GL - DGLC 
In the absence of online knowledge of the estimation delay, several
delayed-information guidance laws have been developed that treat the
delay as a known, fixed, tuning parameter.  Shinar and
Glizer~\cite{shinar_solution_1999} derived an analytical solution for
a pursuit--evasion game with delayed information, which was
subsequently used in~\cite{shinar_nonorthodox_2002} to construct the
DGLC law—a compensated variant of DGL1 that improves worst-case
performance relative to both DGL0 and DGL1 in stochastic settings.
However, DGLC exhibits significant conceptual shortcomings.  First, it
assumes that the delay is time-invariant, despite the fact that
estimator delays are inherently
time-varying~\cite{hexner_temporal_2008}. Second, it treats the delay
purely as a design parameter, rather than estimating it adaptively
from the available data. Third, it models only the evader's
acceleration as delayed, even though, in practice, the delay affects
additional states.  Moreover, the law is typically driven by the
estimator's current (filtered) output, which is not aligned with the
delayed information assumed in its derivation.  An extension,
incorporating time-varying delays, was proposed
in~\cite{shinar_linear_2002}, but it addresses only the first of the aforementioned issues
and, therefore, it yields only limited improvement.
Glizer and Turetsky~\cite{glizer_linear_2009} demonstrated that the
estimated relative velocity perpendicular to the line of sight is also
subject to delay.  This effect is not accounted for in the DGLC law
and contributes to its degraded performance in stochastic scenarios.
By solving a linear differential game with bounded controls and two
information delays, \cite{glizer_linear_2009} derived the DGLCC
guidance law, which achieves superior worst-case performance relative
to DGLC.  However, although it incorporates the additional velocity
delay, the DGLCC law still suffers from the remaining limitations of
DGLC: it treats the delays as known constants, and it lacks a
mechanism to estimate or adapt them based on real-time measurement
data.

% Smoothing
Treating the estimation delays as constants throughout the
interception effectively means that the estimate is regarded by the
guidance law as uniformly delayed throughout the engagement.  
In reality, the estimate is delayed only during the uncertainty interval
that follows an abrupt evasion maneuver, which ends once the estimator
detects the maneuver.  Driving the guidance law with the estimator's
current-time (filtered) outputs throughout the engagement, as done
in~\cite{shinar_nonorthodox_2002,glizer_linear_2009}, then
amounts to 
% A major shortcoming of the implementation of
% existing delayed-information guidance laws is that whereas they
% require past state information, consistent with the delays assumed in
% their derivation, they are, neverthless, typically driven by the
% estimator's current (filtered)
% output~\cite{shinar_nonorthodox_2002,glizer_linear_2009}. This
% effectively treats the estimate as though it were uniformly delayed
% throughout the engagement.  In reality, the estimate is delayed only
% during the uncertainty interval that follows an abrupt evasion
% maneuver, which ends once the estimator detects the maneuver.  Using
% current-time estimates outside this interval therefore introduces
introducing incorrectly-timed information into the guidance loop,
which may degrade the overall interception performance.  Clearly, a
new approach is needed, that (from the guidance law's perspective)
treats the estimation delays as time-varying during the engagement,
and (from the estimator's perspective) feeds the guidance law with
correctly-timed estimates.
We present an integrated solution to the aforementioned problems, that
comprises three main elements. First, we derive a new guidance law
that can handle \emph{time-varying} but known estimation delays in
both the evader's acceleration and the relative velocity. Effectively
generalizing the DGLCC guidance law, we do this by solving a
differential game with bounded controls and two time-varying delays. 
Second, we introduce a novel method for estimating, in real-time, the
time-varying estimation delays, which constitute key inputs required
by the aforementioned delayed-information guidance law.  Inspired
by~\cite{blom_hybrid_1991}, our delay estimator is based on modeling
the maneuver-switching mechanism as a semi-Markov process, and
augmenting the estimator with a sojourn-time state. This provides real
time, measurement-based estimation of the uncertainty interval
following each target maneuver, and, in turn, enables real-time
estimation of the time-varying information delays.
Finally, we employ the computationally efficient fixed-lag particle
smoother of~\cite{kitagawa_monte_1996} to provide delay-consistent
state estimates using all measurements within the estimated
uncertainty interval. These smoothed estimates are then used to drive
the two-delay guidance law, forming a holistic framework that unifies
guidance, delay estimation, and appropriate state smoothing, in a
structurally consistent manner.

% Structure
The remainder of this paper is organized as follows.  The problem is
mathematically formulated in Sec.~\ref{sec:Prob_Form}.
Section~\ref{sec:GL} then develops the perfect-information guidance
law with two time-varying delays.
% Section~\ref{sec:integrated_arch} presents the integrated estimation and guidance architecture, detailing the tracking interface, delay-consistent smoothing, and the robust design optimization strategy.
Section~\ref{sec:est} introduces the particle-based estimation and
smoothing framework, including the real-time uncertainty interval
estimator and the resulting estimation delay estimator.
Section~\ref{sec:design} discusses design considerations and provides
an overview of the complete estimation-guidance scheme.
Section~\ref{sec:sim} presents a comparative Monte Carlo study
evaluating the performance of DGL1, DGLC, and the newly proposed
unified guidance scheme.  Concluding remarks are given in
Sec.~\ref{sec:concl}.  %For completeness, the DGL1 guidance law, with supporting derivations, is summarized in Appendix~\ref{sec:DGL1}.

\section{Problem Formulation}
\label{sec:Prob_Form}
This section mathematically formulates the problem, including the
nonlinear kinematical and dynamical model, the measurement model, and
the linearized model used, in the sequel, to solve the delayed
information differential game.

\subsection{Nonlinear Kinematics and Dynamics}
\label{sec:nonlinear_dyn}
A single pursuer-single evader interception scenario is considered. Figure~\ref{fig:Planar-engagement-geometry} shows a schematic view of the geometry of the assumed planar endgame scenario, where $X_{I}$-$O_{I}$-$Y_{I}$ is a Cartesian inertial reference frame. The pursuer and the evader are denoted by $P$ and $E$, respectively. Variables associated with the pursuer and the evader are denoted by additional subscripts $P$ and $E$, respectively. The speed, lateral acceleration, and path angle are denoted by $V$, $a$, and $\gamma$, respectively.  The slant range between the pursuer and the evader is $\rho$, and the line of sight's (LOS) angle, measured from the $X_{I}$ axis, is $\lambda$.

\begin{figure}[tbh]
  \centering
  \includegraphics[width=3.25in]{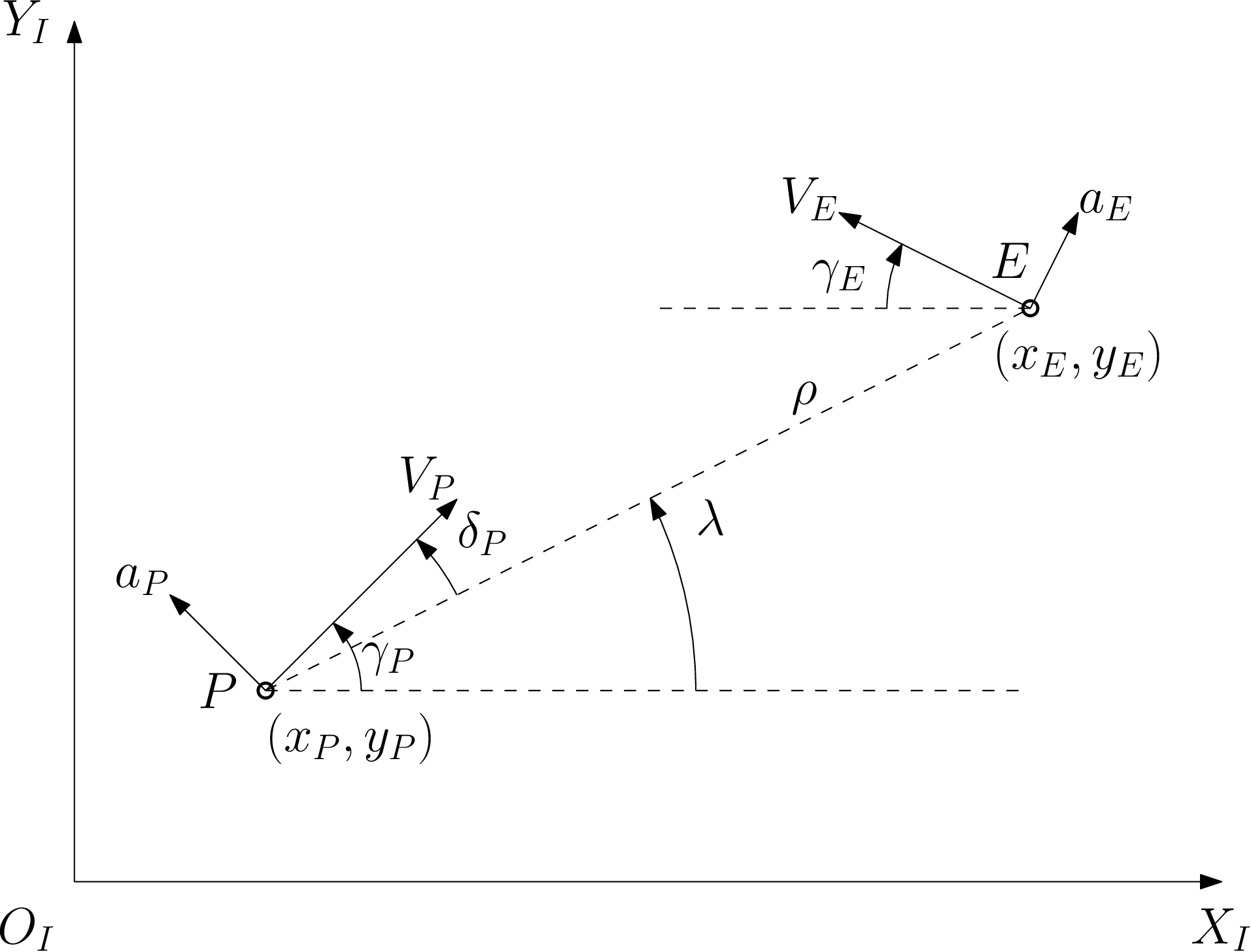}
  \caption{Planar engagement geometry}
  \label{fig:Planar-engagement-geometry}
\end{figure}

We use the following, commonly accepted, underlying assumptions:
\begin{enumerate}
\item The pursuer and the evader can be represented as point masses.
\item \label{ass2} The pursuer's own path angle and lateral
  acceleration are known (e.g., via the pursuer's own navigation
  system).
	\item The pursuer's and the evader's speeds, $V_{P}$ and $V_{E}$, respectively, are known and time-invariant.
	\item \label{ass4} The pursuer and the evader possess first-order dynamics with known time constants, $\tau_{P}$ and $\tau_{E}$, respectively.
	\item \label{ass5} The lateral acceleration bounds of the pursuer and the evader are known constants, $a_{P}^{\max}$ and $a_{E}^{\max}$, respectively, i.e., the acceleration commands satisfy
	\begin{subequations}
		\begin{align}
			a^{c}_{P}(t) &=  a^{\max}_{P}u(t),\qquad \abs{u(t)} \leq 1  \\
			a^{c}_{E}(t) &=  a^{\max}_{E}v(t),\qquad \abs{v(t)} \leq 1 
		\end{align}
	\end{subequations}
	where $u$ and $v$ are the controls of the pursuer and the evader, respectively.
      \item \label{ass6} The evader sets its acceleration command,
        $a^{c}_{E}$, from a known \rev{admissible} set \rev{comprising
          $M$ possible maneuvers.  This finite-mode assumption is
          consistent with the multiple-model estimation framework
          employed in the sequel, in which each maneuver corresponds
          to a discrete mode of the estimator.  Consider, for
            example, the case of $M=2$. In this case, the target
            performs bang-bang maneuvers, known to be the optimal,
            perfect-information evasion strategy in bounded
            acceleration pursuit-evasion games, and its associated
            acceleration command is
    \begin{equation}
    	a^{c}_{E}=\begin{cases}
    		+a_{E}^{\max} & m=1\\
    		-a_{E}^{\max} & m=2
    	\end{cases}
    	\label{Switch_uT}
    \end{equation}
    where $m$ denotes the target maneuver mode.
    }
\end{enumerate}
Following common practice, we use polar coordinates to define the equations of motion (EOM). The pursuer's state vector used for the estimation is:
\begin{equation}
	\textbf{x}_{P} = \begin{bmatrix}
		\rho & \lambda & \gamma_{E} & a_{E}
	\end{bmatrix}^{T}
	\label{eq:StateVec1}
\end{equation}
and the corresponding EOM are:
\begin{subequations}
	\begin{align}
		\dot{\rho}       & = V_{\rho}                                       \\
		\dot{\lambda}    & = \frac{V_{\lambda}}{\rho}                       \\
		\dot{\gamma}_{E} & = \frac{a_{E}}{V_{E}}                            \\
		\dot{a}_{E}      & = -\frac{a_{E}}{\tau_{E}}+\frac{a^{c}_{E}}{\tau_{E}}
	\end{align}
	\label{eq:2}
\end{subequations}
where
\begin{subequations}
	\begin{align}
		V_{\rho}    & =  - (V_{P}\cos\delta_{P}+V_{E}\cos\delta_{E}) \\
		V_{\lambda} & = -V_{P}\sin\delta_{P}+V_{E}\sin\delta_{E}     \\
		\delta_{P}  & = \gamma_{P}-\lambda, \qquad  \delta_{E} = \gamma_{E} + \lambda.
	\end{align}
	\label{eq:EOM1}
\end{subequations}
Based on the underlying assumptions, the pursuer's path angle and
lateral acceleration satisfy the following evolution equations:
\begin{subequations}
	\begin{align}
		\dot{\gamma}_{P}= & \frac{a_{P}}{V_{P}}                            \\
		\dot{a}_{P}=      & -\frac{a_{P}}{\tau_{P}}+\frac{a^{c}_{P}}{\tau_{P}}.
	\end{align}
	\label{eq:EOM2}
\end{subequations}
%where $a^{c}_{P}$ is the pursuer's acceleration command.

\subsection{Measurement Model}

We assume that the pursuer can measure the bearing angle $\delta_{P}$
between its own velocity vector and the LOS to the evader, rendering
the following measurement equation
\begin{equation}
	z = \delta_{P}+\nu = \gamma_{P}-\lambda+\nu
	\label{eq:Meas}
\end{equation}
where $\nu$ is the (possibly non-Gaussian) measurement noise.

\subsection{Linearized Model Used for Guidance}
Differential game-based guidance laws are commonly derived using linear models, and we use such a model to derive this type of guidance law in Sec.~\ref{sec:GL}. Figure~\ref{fig:Linearized_planar_engagement_geometry} presents a schematic view of the linearized planar geometry of the pursuer and the evader.
\begin{figure}[tbh]
  \centering
  \includegraphics[width=3.25in]{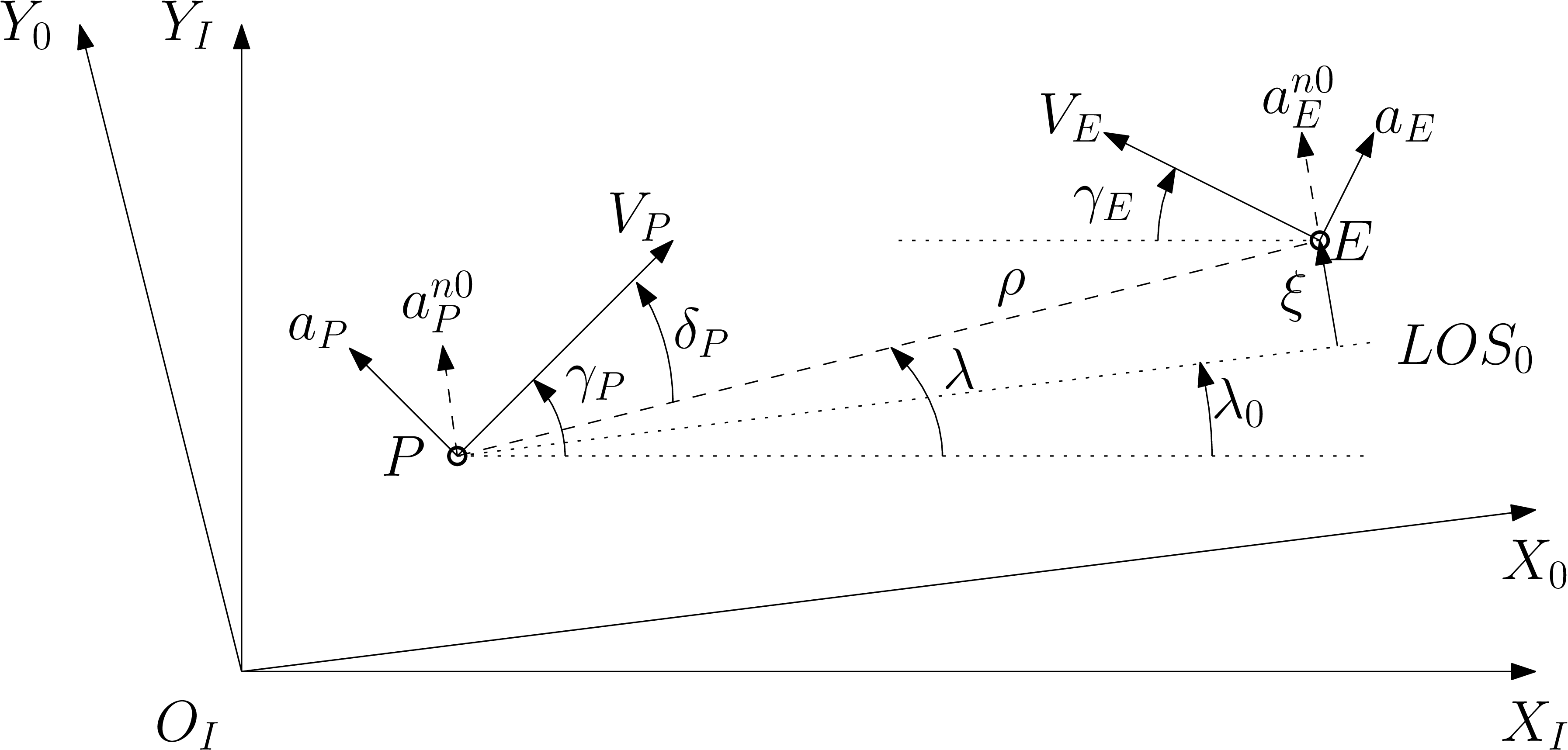}
		\par
	\caption{Linearized planar engagement geometry}
	\label{fig:Linearized_planar_engagement_geometry}
\end{figure}
The initial LOS, denoted as $LOS_{0}$, is used for the linearization, where the axis $X_{0}$ is parallel to it and the $Y_{0}$ axis is perpendicular to it. We denote the relative separation normal to the LOS by $\xi$.  The accelerations, normal to $LOS_{0}$, of the pursuer and the evader, respectively, are denoted by $a_{P}^{n0}$ and $a_{E}^{n0}$.  Assuming small angles, the linearization about the collision course holds, rendering 
\begin{gather}
	a_{P}^{n0} = a_{P}\cos \delta_{P_{0}} \approx a_{P}\\
	a_{E}^{n0} = a_{E}\cos \delta_{E_{0}} \approx a_{E}.
\end{gather}
The state vector of the linearized pursuit-evasion problem is
\begin{equation}
	\textbf{x} = 
	\begin{bmatrix}
		\xi & \dot{\xi} & a_{P} & a_{E}
	\end{bmatrix}^{T}
\end{equation}
and the corresponding linearized EOM are
\begin{subequations}
	\begin{align}
		\dot{x}_{1} & = x_{2},                           & x_{1}(0) = 0      \;\;\; \\
		\dot{x}_{2} & = x_{4} - x_{3} ,                  & x_{2}(0) = x_{20} \\
		\dot{x}_{3} & = (a_{P}^{c} - x_{3}) / \tau_{P},  & x_{3}(0) = 0      \;\;\; \\
		\dot{x}_{4} & = (a_{E}^{c} - x_{4}) / \tau_{E} , & x_{4}(0) = 0      \;\;\;
	\end{align}
\end{subequations}
where
\begin{equation}
	x_{20} = V_{E} \sin \gamma_{E_{0}} - V_{P} \sin \gamma_{P_{0}}.
	\label{eq:x20_dim}
\end{equation}
The time-to-go is approximated by
\begin{equation}
	t_{go} \dfn t_{f} - t \approx - \rho / V_{\rho}.
	\label{eq:t_go}
\end{equation}
where $t_{f}$ denotes the final (interception) time.

Following common practice, we transform the game into nondimensional
form.  The normalized time-to-go is %\YO{(but the second eqn on the line is NOT time to go)}
\begin{equation}
	\tau = t_{go} / \tau_{P}, 
\end{equation}
the nondimensional state vector is
\begin{equation}
	\bar{\textbf{x}} = 
	\begin{bmatrix}
		\frac{x_{1}(t_{f} - \tau_{P}\tau)}{a_{E}^{\max} \tau_{P}^{2}} & 
		\frac{x_{2}(t_{f} - \tau_{P}\tau)}{a_{E}^{\max} \tau_{P}} & 
		\frac{x_{3}(t_{f} - \tau_{P}\tau)}{a_{P}^{\max}} & 
		\frac{x_{4}(t_{f} - \tau_{P}\tau)}{a_{E}^{\max}}
	\end{bmatrix}^{T},
\end{equation}
and the controls become
\begin{subequations}
	\begin{align}
		\bar{u}(\tau) & = u(t_{f} - \tau_{P}\tau), \qquad \abs{\bar{u}(\tau)} \leq 1 \\
		\bar{v}(\tau) & = v(t_{f} - \tau_{P}\tau), \qquad \abs{\bar{v}(\tau)} \leq 1. \label{eq:v_const}
	\end{align}
	\label{eq:cont_const}
\end{subequations}
These normalized states satisfy the following normalized EOM
\begin{subequations}
	\begin{align}
		d\bar{x}_{1}/d\tau & = - \bar{x}_{2}, & \bar{x}_{1}(\tau_{0}) = 0 \;\;\; \\
		d\bar{x}_{2}/d\tau & = - \bar{x}_{4} + \mu \bar{x}_{3} , & \bar{x}_{2}(\tau_{0}) = \bar{x}_{20} \\
		d\bar{x}_{3}/d\tau & = \bar{x}_{3} - \bar{u},  & \bar{x}_{3}(\tau_{0}) = 0 \;\;\; \\
		d\bar{x}_{4}/d\tau & = (\bar{x}_{4} - \bar{v}) / \epsilon ,  & \bar{x}_{4}(\tau_{0}) = 0 \;\;\;
	\end{align}
\end{subequations}
where $\tau_{0} = t_{f} / \tau_{P}$ is the normalized final time, $\mu = a_{P}^{\max} / a_{E}^{\max}$, and
$\epsilon =\tau_{E} / \tau_{P} $. 
The terms $\mu$ and $\mu \epsilon$ are commonly referred to as the maneuverability ratio and the agility ratio, respectively. \rev{The normalized counterpart of $x_{20}$ in~\eqref{eq:x20_dim} is $\bar{x}_{20} = (V_{E}\sin\gamma_{E_{0}} - V_{P}\sin\gamma_{P_{0}})/(a^{\max}_{E}\tau_{P})$.}
\rev{%
\subsection{Performance Criterion}
\label{sec:criterion}
The interception is formulated as a zero-sum pursuit--evasion differential game, with the miss distance taken as the cost. The pursuer minimizes, and the evader maximizes,
\begin{equation}
	J = \abs{x_{1}(t_{f})}.
	\label{eq:cost_dim}
\end{equation}
Under the linearized dynamics and the time-to-go approximation~\eqref{eq:t_go}, the cost reduces to $J = \abs{\bar{x}_{1}(0)}$ in the normalized game-coordinate variables.
}%

\section{DGLCC with Time-Varying Delays}
\label{sec:GL}
In this section we develop a differential game-based guidance law that
can handle two \emph{time-varying} estimation delays: 1) in the
evader's acceleration, 2) in the relative velocity perpendicular to
the LOS. Assuming, for now, that the aforementioned delays
are known, our development follows the derivation of the DGLCC law for
time-invariant delays in~\cite{glizer_linear_2009}.  We first present
the game formulation for the time-varying delayed information
case. Then, we derive the necessary conditions for the optimal
solution of the game. Finally, we present the game space decomposition
yielding the optimal closed-loop controls. 
For completeness, the solution of the delay-free case, resulting in the DGL1 law, is concisely summarized in 
 the supplementary material.
 % Appendix~\ref{sec:DGL1}.  
We note, in passing,
that the guidance law developed herein is to be driven by suitably
timed estimates, as will be shown in the sequel.
\subsection{Time-Varying Delayed Information Game}
Incomplete and noisy measurements generate a dynamic estimation error, resulting in uncertainty about the target's evasion maneuver, which renders the perfect information assumption unrealistic. A standard method of coping with this uncertainty is to assume that the pursuer has access only to a time-delayed estimate, promoting the following definition of the pursuer's information state vector (in a normalized setting):
\begin{equation}
	\bar{\textbf{w}}(\tau) = 
	\begin{bmatrix}
		\bar{x}_{1}(\tau) & \bar{x}_{2} (\tau + \Delta_{1}(\tau)) & \bar{x}_{3}(\tau) & \bar{x}_{4} (\tau + \Delta_{2}(\tau))
	\end{bmatrix}^{T}
\end{equation}
where $\Delta_{1}(\tau)$ and $\Delta_{2}(\tau)$ are the information time delays of the relative velocity and the evader acceleration, respectively. We also assume that $0<\Delta_{1}(\tau) \leq \Delta_{2}(\tau)$ for any $\tau\in[0,\tau_{0}]$, and that both functions are known, monotonically increasing, continuous for $\tau \geq 0$, and differentiable for $\tau > 0$.

In the perfect information case, the optimal controls depend on the
zero-effort miss (ZEM), $\bar{z}(\tau)$. % ~\eqref{eq:z_DGL1}.  
% \begin{equation}
% 	\bar{z}(\tau) = \bar{x}_{1}(\tau) + \tau \bar{x}_{2}(\tau) - \mu \Psi(\tau)\bar{x}_{3}(\tau) + \epsilon^{2} \Psi (\tau / \epsilon) \bar{x}_{4}(\tau).
% 	\label{eq:z_DGL1}
% \end{equation}
When the information is
delayed, the ZEM at the current time becomes an uncertainty set,
$\bar{Z}(\tau)$, consisting of all possible values of $\bar{z}(\tau)$
given the information vector $\bar{w}$.  Define
\begin{gather}
	\bar{z}_{1}(\tau) \dfn \bar{x}_{1}(\tau) - \mu
	\Psi(\tau)\bar{x}_{3}(\tau)\label{eq:z_1}\\
	\bar{z}_{2}(\tau) \dfn \tau \bar{x}_{2}(\tau) + \epsilon^{2} \Psi
	(\tau / \epsilon) \bar{x}_{4}\rev{(\tau)}
	\label{eq:z_2}  
\end{gather}
where $\bar{x}_{1}(\tau)$ and $\bar{x}_{3}(\tau)$ are assumed to be perfectly known and delay-free, whereas $\bar{x}_{2}(\tau)$ and $\bar{x}_{4}(\tau)$ are assumed to be perfectly known but delayed, and 
\begin{equation}
	\Psi(\tau) = \exp(-\tau) + \tau - 1.
    \label{eq:Psi}
\end{equation}
With these definitions on hand, the uncertainty set is expressed as
\begin{equation}
	\bar{Z}(\tau) = [\bar{z}_{1}(\tau) + m(\tau),\bar{z}_{1}(\tau) + M(\tau)]
\end{equation}
where $m(\tau)$ and $M(\tau)$ are the solutions of the following optimization problems
\begin{subequations}\label{eq:opt_mM}
	\begin{align}
		m(\tau) &= \min_{\abs{\bar{v}(\sigma)}\leq1,\;\sigma\in[\tau,\tau + \Delta_{2}(\tau)]} \bar{z}_{2}(\tau)  \\
		M(\tau) &= \max_{\abs{\bar{v}(\sigma)}\leq1,\;\sigma\in[\tau,\tau + \Delta_{2}(\tau)]} \bar{z}_{2}(\tau).
	\end{align}
\end{subequations}
\rev{%
Computing the center of the uncertainty set is the conceptual cornerstone of the game's reformulation. Since $\bar{z}_{1}(\tau)$ is delay-free and perfectly known to the pursuer~\eqref{eq:z_1}, the structure of $\bar{Z}(\tau)$ is fully determined by the achievable range of $\bar{z}_{2}(\tau)$, which depends, through the dynamics of $\bar{x}_{2}$ and $\bar{x}_{4}$, on the evader's control trajectory $\bar{v}$ over the delay window $[\tau, \tau + \Delta_{2}(\tau)]$. As the pursuer has no further information that would bias its choice within $\bar{Z}(\tau)$, it is natural to take the center of $\bar{Z}(\tau)$ as the working state of the game. By the definition of the uncertainty set, this center is $\bar{z}_{1}(\tau)$ plus the midpoint of the achievable range of $\bar{z}_{2}(\tau)$, namely $[m(\tau) + M(\tau)]/2$.}

\rev{%
The derivation of this midpoint, given in full in
\ifsupp Appendix~\ref{append1}\else the supplementary material\fi,
proceeds in three steps. First, the admissible range of $\bar{x}_{4}$ at the intermediate time $\tau + \Delta_{1}(\tau)$ is determined by integrating the dynamics on $[\tau + \Delta_{1}(\tau), \tau + \Delta_{2}(\tau)]$ with boundary condition $\bar{x}_{4}(\tau + \Delta_{2}(\tau)) = \bar{w}_{4}(\tau)$; the bounds depend monotonically on $\bar{v}$ over this sub-interval. Second, $\bar{z}_{2}(\tau)$ is expressed as an affine functional of $\bar{v}$ on $[\tau, \tau + \Delta_{1}(\tau)]$ and of the intermediate value $\bar{x}_{4}(\tau + \Delta_{1}(\tau))$. Third, since this expression is linear in both arguments, the extrema $m(\tau)$ and $M(\tau)$ are attained at $\bar{v} \equiv \mp 1$, and the symmetric structure of the resulting expressions yields the closed-form center given below:
}%
\begin{align}
	\bar{z}_{cc}(\tau) =\; & \bar{x}_{1}(\tau) + \tau \bar{x}_{2}(\tau + \Delta_{1}(\tau)) - \mu \Psi(\tau)\bar{x}_{3}(\tau) - \mu \tau \int_{\tau}^{\tau + \Delta_{1}(\tau)} \bar{x}_{3}(s)ds \notag \\
	& + \epsilon \exp(-\Delta_{2}(\tau)/\epsilon)[\tau
	\exp(\Delta_{1}(\tau)/\epsilon) + \epsilon\exp(-\tau/\epsilon) -
	\epsilon] \bar{x}_{4}(\tau + \Delta_{2}(\tau)).
	\label{eq:z_cc}
\end{align}

Next, we seek to find the evolution equation for $\bar{z}_{cc}(\tau)$.
Differentiating Eq.~\eqref{eq:z_cc} yields 
(see \ifsupp Appendix~\ref{append2}\else the supplementary material\fi\ for detailed derivation)
\begin{align}
  \frac{d \bar{z}_{cc}(\tau)}{ d \tau} & = \mu \Psi(\tau) \bar{u}(\tau) - \int_{\tau}^{\tau+\Delta_{1}(\tau)}\bar{v}(s)ds +  \int_{\tau}^{\tau+\Delta_{2}(\tau)}\exp((\tau-s)/\epsilon)\bar{v}(s)ds \notag \\
  - & \exp(\Delta_{1}(\tau)/\epsilon) [\tau(1+\gamma_{1}(\tau))/\epsilon + 1]  \int_{\tau + \Delta_{1}(\tau)}^{\tau + \Delta_{2}(\tau)} \exp((\tau - s)/\epsilon) \bar{v}(s)ds  \notag \\
  - &\exp(-\Delta_{2}(\tau)/\epsilon)[\tau \exp(\Delta_{1}(\tau)/\epsilon) + \epsilon\exp(-\tau/\epsilon) - \epsilon] \bar{v}(\tau + \Delta_{2}(\tau))(1 + \gamma_{2}(\tau)) 
\label{eq:z_cc_star}
\end{align}
where
\begin{equation}
\gamma_{1}(\tau) \dfn d \Delta_{1}(\tau) / d\tau, \qquad
\gamma_{2}(\tau) \dfn d \Delta_{2}(\tau) / d\tau.
\end{equation}

\begin{remark}
  Explicitly addressing time-dependent time delays, the dynamic
  equation for $\bar{z}_{cc}$, Eq.~\eqref{eq:z_cc_star}, differs from
  the corresponding equation in~\cite{glizer_linear_2009}, but reduces
  to it when the time delays are set to be time-invariant.
\end{remark}

This dynamics is independent of $\bar{z}_{cc}$ and the state variables $\bar{x}_{i}$, $i=1,...,4$. It depends only on the pursuer's (delay-free) and evader's (time-delayed) controls.
We can formulate a new game by using $\bar{z}_{cc}$ and its dynamics, which we rewrite as
\begin{equation}\label{eq:z_cc_dyn}
	d \bar{z}_{cc}/d\tau = \mu \Psi(\tau) \bar{u}(\tau) + \mathcal{F}(\bar{v}_{\tau}(\cdot),\tau)
\end{equation}
where
\begin{equation}
\bar{v}_{\tau}(\sigma) = \bar{v}(\tau + \sigma)
\end{equation}
for each $\tau \in [0,\tau_{0}]$ and $\sigma \in [0,\Delta_{2}(\tau)]$. We define $\mathcal{F}$ as
\begin{align}\label{eq:F_def}
\mathcal{F}&(\bar{v}_{\tau}(\cdot),\tau)  = - \int_{\tau}^{\tau+\Delta_{1}(\tau)}\bar{v}(s)ds +  \int_{\tau}^{\tau+\Delta_{2}(\tau)}\exp((\tau-s)/\epsilon)\bar{v}(s)ds \notag \\
& - \exp(\Delta_{1}(\tau)/\epsilon) [\tau(1+\gamma_{1}(\tau))/\epsilon + 1]  \int_{\tau + \Delta_{1}(\tau)}^{\tau + \Delta_{2}(\tau)} \exp((\tau - s)/\epsilon) \bar{v}(s)ds  \notag \\
& - \exp(-\Delta_{2}(\tau)/\epsilon)[\tau \exp(\Delta_{1}(\tau)/\epsilon) + \epsilon\exp(-\tau/\epsilon) - \epsilon] \bar{v}(\tau + \Delta_{2}(\tau))(1 + \gamma_{2}(\tau)).
\end{align}

Since the evader control $\bar{v}$ is delayed, we add the initial condition to the game formulation
\begin{equation}
\bar{v}(\zeta) = \varphi(\zeta), \quad \zeta \in (\tau_{0},\tau_{0} + \Delta_{2}(\tau_0)]
\end{equation}
where $\varphi$ satisfies
\begin{equation}
	\abs{\varphi(\zeta)} \leq 1, \quad \zeta \in (\tau_{0},\tau_{0} + \Delta_{2}(\tau_0)]
\end{equation}
which is the same control constraint as in Eq.~\eqref{eq:cont_const}.

\rev{The new game's cost function is
\begin{equation}
    J_{cc} = \abs{\bar{z}_{cc}(0)} = \abs{\bar{z}(0)} =
    \abs{\bar{x}_{1}(0)} = J.
  \label{eq:cost}
\end{equation}
This cost is minimized by the pursuer through the running
  control $\bar{u}(\cdot)$ on $[0,\tau_{0}]$, and maximized by the
  evader through both the running control $\bar{v}(\cdot)$ on
  $[0,\tau_{0}]$ and the delayed-extension control $\varphi(\cdot)$ on
  $(\tau_{0},\tau_{0}+\Delta_{2}(\tau_{0})]$, yielding the
  saddle-point value
\begin{equation}
	J_{cc}^{*} = \min_{\bar{u}}\max_{(\bar{v},\varphi)}\, J_{cc}.
	\label{eq:cost_saddle}
\end{equation}}
\subsection{Necessary Conditions for Optimality}
\begin{theorem}
If $\bar{u}^{*}(\tau)$ and $\bar{v}^{*}(\tau)$ are the optimal controls, and $\varphi^{*}(\zeta)$ is the optimal initial condition of the delayed information differential game, then
\begin{subequations}
\begin{align}
\bar{u}^{*}(\tau) &= \sign \bar{z}_{cc}(0), \qquad \tau \in [0,\tau_{0}] \\
\bar{v}^{*}(\tau) &= \sign \bar{z}_{cc}(0), \qquad \tau \in [0,\tau_{0}] \\
\varphi^{*}(\zeta)  &= \sign \bar{z}_{cc}(0), \qquad \zeta \in (\tau_{0},\tau_{0} +\Delta_{2}(\tau_{0})]
\end{align}
\end{subequations}
\label{th:1}
\end{theorem}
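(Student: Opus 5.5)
The plan is to exploit the fact, noted after Eq.~\eqref{eq:z_cc_star}, that the dynamics of $\bar{z}_{cc}$ do not depend on the state. This lets the terminal value $\bar{z}_{cc}(0)$ be written as an explicit affine functional of the controls, from which the optimal controls can be read off pointwise. Integrating Eq.~\eqref{eq:z_cc_dyn} backward in normalized time from $\tau_{0}$ to $0$ gives
\begin{equation*}
\bar{z}_{cc}(0) = \bar{z}_{cc}(\tau_{0}) - \int_{0}^{\tau_{0}} \mu\Psi(\tau)\bar{u}(\tau)\,d\tau - \int_{0}^{\tau_{0}} \mathcal{F}(\bar{v}_{\tau}(\cdot),\tau)\,d\tau .
\end{equation*}
Here $\bar{z}_{cc}(\tau_{0})$ is fixed by the initial data and does not depend on the choice of $\bar{u}$, $\bar{v}$, or $\varphi$.

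The second step rewrites the double integral $\int_{0}^{\tau_{0}}\mathcal{F}\,d\tau$ from Eq.~\eqref{eq:F_def} as a single integral $\int_{0}^{\tau_{0}+\Delta_{2}(\tau_{0})} K(s)\,\tilde{v}(s)\,ds$. In this integral, $\tilde{v}$ equals $\bar{v}$ on $[0,\tau_{0}]$ and equals $\varphi$ on $(\tau_{0},\tau_{0}+\Delta_{2}(\tau_{0})]$.
\begin{itemize}
\item The three integral terms of $\mathcal{F}$ are handled by Fubini's theorem: the order of integration is swapped over the regions $\{(\tau,s): \tau\le s\le \tau+\Delta_{i}(\tau)\}$. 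Because $\Delta_{1},\Delta_{2}$ are continuous and monotone, the maps $\tau\mapsto\tau+\Delta_{i}(\tau)$ are strictly increasing, so these regions are described in $s$ by invertible inverse-delay functions.
\item The pointwise term $\bar{v}(\tau+\Delta_{2}(\tau))(1+\gamma_{2}(\tau))$ is handled by the substitution $s=\tau+\Delta_{2}(\tau)$, with $ds=(1+\gamma_{2}(\tau))d\tau$. The factor $(1+\gamma_{2})$ is cancelled exactly by this Jacobian, which is precisely why it appears in Eq.~\eqref{eq:z_cc_star}.
\end{itemize}
The result is
\begin{equation*}
\bar{z}_{cc}(0)=\bar{z}_{cc}(\tau_{0})-\int_{0}^{\tau_{0}}\mu\Psi(\tau)\bar{u}(\tau)\,d\tau-\int_{0}^{\tau_{0}+\Delta_{2}(\tau_{0})}K(s)\tilde{v}(s)\,ds .
\end{equation*}
This is an affine functional that is separable in $\bar{u}$ and $\tilde{v}$.

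With this representation, the saddle point of $J_{cc}=\abs{\bar{z}_{cc}(0)}$ in Eq.~\eqref{eq:cost_saddle} follows from a pointwise argument, equivalently Pontryagin's principle with constant costate $\sign\bar{z}_{cc}(0)$.
\begin{itemize}
\item Since $\Psi(\tau)\ge 0$, the pursuer reduces $\abs{\bar{z}_{cc}(0)}$ by choosing $\bar{u}=\sign\bar{z}_{cc}(0)$ almost everywhere.
\item The evader maximizes $\abs{\bar{z}_{cc}(0)}$ by choosing $\tilde{v}(s)=-\sign K(s)\,\sign\bar{z}_{cc}(0)$.
\end{itemize}
The claim that $\bar{v}^{*}=\varphi^{*}=\sign\bar{z}_{cc}(0)$ on the whole interval therefore reduces to showing $K(s)\le 0$ for all $s\in[0,\tau_{0}+\Delta_{2}(\tau_{0})]$. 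The usual argument excluding $\bar{z}_{cc}(0)=0$, where any control is optimal, would be added for completeness.

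I expect the main obstacle to be establishing the sign of $K$. $K(s)$ collects contributions of mixed sign: a $-1$ from the $\Delta_{1}$ window, a positive exponential from the $\Delta_{2}$ window, and negative contributions from the third and fourth terms of $\mathcal{F}$, all of which now depend on $\gamma_{1}$ and $\gamma_{2}$. My first attempt would be to avoid computing $K$ directly and instead use the structure behind Eq.~\eqref{eq:z_cc}. If $\bar{v}\equiv\pm1$ on the appropriate intervals produces the extremes of the uncertainty set (as in Appendix~\ref{append1}), then by linearity the total effect of a constant evader control on $\bar{z}_{cc}(0)$ is $\mp$ an explicit monotone quantity. I would then try to show $K(s)=-\partial\bar{z}_{cc}(0)/\partial\tilde{v}(s)$ reorganizes into terms of the form $\tau$ times positive exponentials plus $\epsilon^{2}\Psi(\cdot/\epsilon)$-type terms, which are nonnegative. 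This would use $\Delta_{1}\le\Delta_{2}$ and the monotonicity assumption $\gamma_{i}\ge 0$. The monotonicity assumption in the problem statement is presumably there exactly for this step.
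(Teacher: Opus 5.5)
There is a gap. Your reduction of $\bar z_{cc}(0)$ to the affine functional $\bar z_{cc}(\tau_0)-\int_0^{\tau_0}\mu\Psi\bar u\,d\tau-\int_0^{\tau_0+\Delta_2(\tau_0)}K\tilde v\,ds$ is sound, but the conclusion then rests entirely on $K(s)\le 0$, and you leave that step open. Your fallback would not close it. The extremization behind Eq.~\eqref{eq:z_cc} (Appendix~\ref{append1}) concerns $\bar z_2(\tau)$ at a single $\tau$ over one delay window. Evaluating how $\bar z_{cc}(0)$ responds to constant evader controls only tells you about $\int K(s)\,ds$, not the pointwise sign of $K$. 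Without the pointwise sign you cannot rule out a non-constant $\tilde v$ that switches sign wherever $K>0$ and does better, so $\bar v^{*}=\varphi^{*}=\sign\bar z_{cc}(0)$ is not yet proved. Computing $K$ after the Fubini swap, with inverse-delay functions in the limits, and then searching for a nonnegative rearrangement is also the hard way round.

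The missing idea is to regroup the terms of $\mathcal F$ on common sub-windows \emph{before} swapping integrals. The paper does this in Stage~1 of Appendix~\ref{append3} by substituting $\sigma=s-\tau$ in Eq.~\eqref{eq:F_def}:
\begin{itemize}
\item On $\sigma\in[0,\Delta_1(\tau)]$, the $-1$ and the positive exponential combine into $\exp(-\sigma/\epsilon)-1\le0$.
\item On $\sigma\in[\Delta_1(\tau),\Delta_2(\tau)]$, the positive exponential and the third term combine into $[1-\exp(\Delta_1(\tau)/\epsilon)(\tau(1+\gamma_1(\tau))/\epsilon+1)]\exp(-\sigma/\epsilon)<0$.
\item The coefficient of $\bar v(\tau+\Delta_2(\tau))$ equals $-\exp(-\Delta_2(\tau)/\epsilon)[\tau(\exp(\Delta_1(\tau)/\epsilon)-1)+\epsilon\Psi(\tau/\epsilon)](1+\gamma_2(\tau))\le0$.
\end{itemize}
Only $1+\gamma_i\ge0$ is used, which is the same condition that makes $\tau\mapsto\tau+\Delta_i(\tau)$ monotone for your change of variables. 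Fubini and your Jacobian substitution preserve sign, so $K(s)$ is a $\tau$-integral of a nonpositive kernel plus a nonpositive term. Hence $K\le0$; indeed $K<0$ for a.e.\ $s$, which gives a.e.\ uniqueness of the evader's optimal control.

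With this lemma your argument is complete, and it differs from the paper's. The paper bounds $\lvert\mathcal F\rvert\le A(\tau)$ separately for each $\tau$, relaxes to an auxiliary game with an instantaneous control $V=\mathcal F$, and invokes known necessary conditions. It then maps back because a constant $\bar v$ attains $\mp A(\tau)$ at every $\tau$ simultaneously. Your aggregated-kernel version accounts exactly for the overlap of delay windows at different $\tau$ and needs no external result. The paper's per-$\tau$ bound, on the other hand, yields $A(\tau)$, which it needs anyway for $R(\tau)$ and the game-space decomposition.
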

% The proof is deferred to the supplementary material due to the paper's length limitation.
\begin{proof}
  The proof is deferred to Appendix~\ref{append3}.
\end{proof}

\rev{%
\begin{remark}
Theorem~\ref{th:1} establishes the saddle-point of the game, characterizing the optimal strategies of both players. The formulation is asymmetric in two respects. First, in the information structure: following the worst-case assumption standard in the differential game literature~\cite{shinar_solution_1999,glizer_linear_2009}, the pursuer has access only to delayed estimates of $\bar{x}_{2}$ and $\bar{x}_{4}$, whereas the evader is assumed to have perfect information on the full state, in addition to knowing the pursuer's information structure; this is the most adversarial setting from the pursuer's standpoint. Second, in the control spaces: due to the delayed information, the evader's strategy comprises both the running control $\bar{v}(\tau)$ on $[0,\tau_{0}]$ and the delayed-extension control $\varphi(\zeta)$ on $(\tau_{0},\tau_{0}+\Delta_{2}(\tau_{0})]$, whereas the pursuer's control acts only on $[0,\tau_{0}]$. Despite these asymmetries, the optimal controls of both players reduce to the same constant value, $\sign\bar{z}_{cc}(0)$, over their respective domains.
\end{remark}
}%
\subsection{Game Space Decomposition}
\rev{Substituting the optimal controls of Thm.~\ref{th:1} into the dynamics~\eqref{eq:z_cc_dyn} yields the closed-loop dynamics of the optimal trajectories. As established in Stage~1 of the proof of Theorem~\ref{th:1} (Appendix~\ref{append3}), the delayed-control functional $\mathcal{F}$ admits the bound
\begin{equation}
	\abs{\mathcal{F}(\bar{v}_{\tau}(\cdot),\tau)} \leq A(\tau), \qquad \tau \in [0,\tau_{0}],
	\label{eq:F_bound_main}
\end{equation}
with $A(\tau)$ defined in~\eqref{eq:A_tau}; the bound follows from the negativity of the integrands and of the coefficient of $\bar{v}_{\tau}(\Delta_{2}(\tau))$ in the expression for $\mathcal{F}$. Consequently, the bound is attained at the constant-sign extrema $\bar{v}_{\tau}(\sigma) \equiv \pm 1$: by direct substitution, $\bar{v}_{\tau}(\sigma) \equiv +1$ yields $\mathcal{F} = -A(\tau)$, while $\bar{v}_{\tau}(\sigma) \equiv -1$ yields $\mathcal{F} = +A(\tau)$. Substituting the optimal evader controls $\bar{v}^{*}_{\tau}(\sigma) = \sign\bar{z}_{cc}(0)$ thus gives $\mathcal{F}(\bar{v}^{*}_{\tau}(\cdot),\tau) = -A(\tau)\sign\bar{z}_{cc}(0)$. Combining this with the optimal pursuer control reduces~\eqref{eq:z_cc_dyn} to}
\begin{equation}
	d \bar{z}_{cc}/d\tau = R(\tau) \sign \bar{z}_{cc}(0)
	\label{eq:zcc_dyn_opt}
\end{equation}
where
\begin{equation}
	R(\tau) \dfn \mu \Psi(\tau) -  A(\tau) 
	\label{eq:R_def}
\end{equation}
and 
\begin{align}
	A(\tau) &\dfn  \Delta_{1}(\tau) + \tau(1+\gamma_{1}(\tau)) + \epsilon
	\exp(-(\tau + \Delta_{2}(\tau))/\epsilon) (1 + \gamma_{2}(\tau))  \notag \\
	- &  \epsilon\exp( - \Delta_{2}(\tau) / \epsilon )\gamma_{2}(\tau) 
	+  \exp((\Delta_{1}(\tau) - \Delta_{2}(\tau))/\epsilon)[\tau (\gamma_{2}(\tau) - \gamma_{1}(\tau)) - \epsilon]
    \label{eq:A_tau}
\end{align}
and it is shown that $A(\tau) > 0 $ for any $\tau \in [0 , \tau_{0}]$ in 
% the proof of Thm.~\ref{th:1}. 
Appendix~\ref{append3}.

The behavior of optimal trajectories of the game is determined by the order and the number of sign changes of the function $R(\tau)$. As this function depends on the time delays, we assume the following functional forms for the time-varying delays~\cite{shinar_linear_2002}
\begin{subequations}
	\label{eq:delta_form}
	\begin{align}
		\Delta_{1}(\tau) &= a + b_{1} \tau^{\omega} \\
		\Delta_{2}(\tau) &= a + b_{2} \tau^{\omega}
	\end{align}
\end{subequations}
where $0 < a$, $0 \leq b_{1} \leq b_{2}$ and $0 < \omega < 1$ are given constants. 
The motivation for this functional form, and in particular the value of $\omega$, will be provided in Sec.~\ref{sec:simp_mod}, based on a first-principles analysis of the detection delay.
We next wish to investigate the properties of $R(\tau)$. Before doing so, we make the assumption that the pursuer's maneuver capability is superior to that of the evader, that is, $\mu > 1$. This assumption is reasonable, as the evader is assumed to maneuver effectively, and it is well known that the pursuer's options are very limited against an optimal evasion maneuver unless it possesses a maneuverability advantage over the evader~\cite{shinar_solution_1981}. With this assumption on hand, we have the following lemma.
\begin{lemma}
	For $\mu > 1$ the function $R(\tau)$ has at least one positive root, $\tau_{s}$, for any $0 < a$, $0 < b_{1} \leq b_{2}$ and	$0 < \omega < 1$.
\end{lemma}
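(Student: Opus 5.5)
The plan is to apply the intermediate value theorem to $R(\tau)=\mu\Psi(\tau)-A(\tau)$ on $(0,\infty)$. With the delay forms \eqref{eq:delta_form}, $\Delta_i$ is smooth on $\tau>0$ and $\gamma_i(\tau)=b_i\omega\tau^{\omega-1}$. Hence $R$ is continuous on $(0,\infty)$. It then suffices to show two things: $\lim_{\tau\to0^+}R(\tau)<0$, and $R(\tau)\to+\infty$ as $\tau\to\infty$. Continuity then yields a root $\tau_s>0$.

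\emph{Behaviour near $\tau=0$.} Since $\Psi(0)=0$, the task reduces to computing $\lim_{\tau\to0^+}A(\tau)$. This is the delicate step, because $\gamma_1,\gamma_2\to\infty$ as $\tau\to0^+$ when $\omega<1$. I will handle the singular terms one at a time.

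The terms $\tau\gamma_1(\tau)=b_1\omega\tau^{\omega}$ and $\tau(\gamma_2-\gamma_1)=(b_2-b_1)\omega\tau^\omega$ both tend to $0$.

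The two terms in $A$ that individually blow up are $\epsilon e^{-(\tau+\Delta_2)/\epsilon}\gamma_2$ and $-\epsilon e^{-\Delta_2/\epsilon}\gamma_2$. I will combine them as
\begin{equation*}
\epsilon e^{-\Delta_2(\tau)/\epsilon}\gamma_2(\tau)\bigl(e^{-\tau/\epsilon}-1\bigr).
\end{equation*}
Using $|e^{-\tau/\epsilon}-1|\le\tau/\epsilon$, this is bounded in magnitude by $\tau\gamma_2(\tau)=b_2\omega\tau^\omega\to0$.

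The remaining terms converge directly. This gives
\begin{equation*}
\lim_{\tau\to0^+}A(\tau)=a+\epsilon e^{-a/\epsilon}-\epsilon=a-\epsilon\bigl(1-e^{-a/\epsilon}\bigr).
\end{equation*}
This limit is strictly positive because $1-e^{-x}<x$ for $x=a/\epsilon>0$, which is where $a>0$ is used. Hence $R(0^+)<0$, and $R$ extends continuously to $\tau=0$ with a negative value.

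\emph{Behaviour as $\tau\to\infty$.} For the linear growth, $\mu\Psi(\tau)=\mu\tau-\mu+\mu e^{-\tau}$. In $A(\tau)$, the dominant term is $\tau(1+\gamma_1)=\tau+b_1\omega\tau^{\omega}$. For the remaining pieces:
\begin{itemize}
\item $\Delta_1=a+b_1\tau^\omega$ is $O(\tau^\omega)$.
\item The exponential terms $\epsilon e^{-(\tau+\Delta_2)/\epsilon}(1+\gamma_2)$ and $\epsilon e^{-\Delta_2/\epsilon}\gamma_2$ are bounded, since $\gamma_2\to0$ for large $\tau$.
\item Because $\Delta_1\le\Delta_2$, the factor $e^{(\Delta_1-\Delta_2)/\epsilon}\le1$. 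So $e^{(\Delta_1-\Delta_2)/\epsilon}[\tau(\gamma_2-\gamma_1)-\epsilon]$ is $O(\tau^\omega)$.
\end{itemize}
Therefore
\begin{equation*}
R(\tau)=(\mu-1)\tau+O(\tau^\omega)+O(1).
\end{equation*}
Since $\mu>1$ and $\omega<1$, this gives $R(\tau)\to+\infty$.

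Combining the two limits with continuity of $R$ on $(0,\infty)$, the intermediate value theorem yields $\tau_s>0$ with $R(\tau_s)=0$. The main obstacle is the limit at $0^+$: the individually divergent $\gamma_2$ terms must be paired before passing to the limit, and the sign of the resulting constant must be checked.
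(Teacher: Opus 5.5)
Your proof is correct and takes the same route as the paper: $R$ is continuous, $R<0$ near $\tau=0$, and $R\sim(\mu-1)\tau\to+\infty$, so the intermediate value theorem gives a positive root. The paper simply writes $R(0)=-A(0)<0$, relying on the positivity of $A$ from the sign argument for $\mathcal{F}$ in Appendix~\ref{append3}, even though $\gamma_{1},\gamma_{2}$ blow up at $\tau=0$ when $\omega<1$; your pairing of the two divergent $\gamma_{2}$ terms and your explicit limit $a-\epsilon(1-e^{-a/\epsilon})>0$ supply the rigor that step omits.
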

\begin{proof}
	We have that
	\begin{equation}
		R(0) = - A(0) < 0
	\end{equation}
	and 
	\begin{equation}
		\lim\limits_{\tau \rightarrow \infty} R(\tau) \approx (\mu - 1) \tau > 0
	\end{equation}
	Hence, there exists $\tau'$ such that $R(\tau) > 0$ for all
        $\tau\geq \tau'$.  Noting that $R(\tau)$ is a continuous
        function yields the lemma.
\end{proof}
Moreover, we conjecture that $R(\tau)$ has a single positive root
$\tau_{s}$. Although not formally proven, this conjecture is supported
by the results of a thorough simulation study, comprising the
following $1$,$197$,$900$ cases (for brevity, we use the
MATLAB\textsuperscript{\textregistered} range creation notation):
\begin{equation}
\begin{aligned}
	a &= 0.01:0.01:0.1, & b_{2} = 0.06:0.06:0.6,\quad b_{1} &= (0:0.125:1) b_{2}, \\
	\rev{\omega} &= (1:1:11)/12,   & \mu = 1.1:0.19:3, \;\;\qquad \epsilon &=  0.2:0.18:2.
\end{aligned}
\end{equation}

Figure~\ref{fig:Game_Space_mu_1} presents a schematic view of the
decomposition of the game space when $R(\tau)$ has a single positive
root. The game space is decomposed into a regular region, denoted by
$\mathcal{D}_{1}$, and a singular region, denoted by
$\mathcal{D}_{0}$. These regions are determined by
\begin{subequations}
	\begin{align}
		\mathcal{D}_{1} &= \{ 0\leq \tau \leq \tau_{s} \} \cup \left\lbrace \abs{\bar{z}_{cc}} \geq \int_{\tau_{s}}^{\tau} R(s)ds, \quad \tau \geq \tau_{s} \right\rbrace \\
		\mathcal{D}_{0} &= \left\lbrace \abs{ \bar{z}_{cc}} < \int_{\tau_{s}}^{\tau} R(s)ds, \quad \tau \geq \tau_{s} \right\rbrace.
	\end{align}
\end{subequations}

In the regular region, $\mathcal{D}_{1}$, the optimal feedback strategies of both players are~\rev{\cite{glizer_linear_2009,shima_time-varying_2002}}
\begin{equation}
	\bar{u}^{*}(\tau) = \bar{v}^{*}(\tau) = \sign \bar{z}_{cc}(\tau)
	\label{eq:opt_feedback}
\end{equation}
and the value of the game in this case is
\begin{equation}
	J_{cc}^* = \abs{\bar{z}_{cc}(\tau_{0})} + \int_{\tau_{0}}^{0} R(\tau)d\tau
	\label{eq:J1}
\end{equation}
which never vanishes.

In the singular region, $\mathcal{D}_{0}$, the optimal strategies of both players are arbitrary. The corresponding value of the game is
\begin{equation}
	J_{cc}^* = \int_{\tau_{s}}^{0} R(\tau)d\tau \dfn M_{s}
	\label{eq:J2}
\end{equation}
which never vanishes as well. For practical reasons we use a linear,
chatter prevention strategy for the pursuer inside the singular
region~\rev{\cite{shaferman_stochastic_2016}}, yielding
\begin{equation}
	\bar{u}^{*}(\tau) = 
	\begin{cases}
		\sign \bar{z}_{cc}(\tau), & \bar{z}_{cc} \in \mathcal{D}_{1} \\
		\text{sat} \frac{\bar{z}_{cc}(\tau)}{k \bar{z}^{*}_{cc}(\tau)}, & \bar{z}_{cc} \in \mathcal{D}_{0} 
	\end{cases}
	\label{eq:DGLCC_GL}
\end{equation}
where $\sat(\cdot)$ stands for the saturation function, $0 < k \leq 1$ is the portion of the singular region in which the control is linear, and $\bar{z}^{*}_{cc}$ is the
boundary of the singular region, computed as
\begin{equation}
    \bar{z}^{*}_{cc}(\tau) =
    \begin{cases}
        \int_{\tau_{s}}^{\tau} R(s)ds, & \tau \geq \tau_s \\
        0, & 0\leq \tau < \tau_s
    \end{cases}.
\end{equation}

\begin{figure}[tbh]
  \centering
  \includegraphics[width=3.25in]{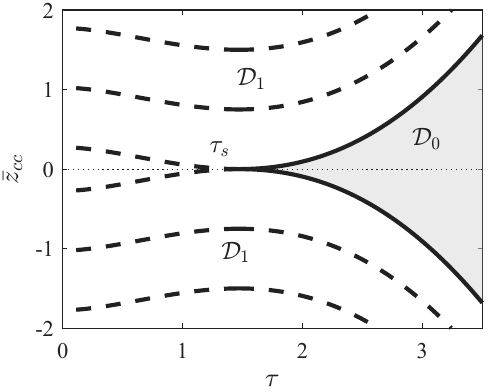}
  \caption{Game Space, $\mu > 1$}
  \label{fig:Game_Space_mu_1}
\end{figure}

\begin{remark}
  The optimal closed-loop controls are calculated based on the center
  of the uncertainty set,
  Eq.~\eqref{eq:z_cc}, which requires time-delayed information about
  the states $\bar{x}_{2}$ and $\bar{x}_{4}$.  The standard approach,
  employed in the existing literature, is to use the current step
  estimates (i.e., the filtered estimates), effectively regarding them
  as uniformly delayed throughout the
  game~\cite{shinar_nonorthodox_2002,glizer_linear_2009}.  This
  approach can only be regarded as a rough approximation, at best. In
  contradistinction, we drive our guidance law by estimates
  corresponding to precisely-computed delayed time, as required by its
  derivation. As presented in the next section, such delayed estimates
  are computed using a state smoother.
\label{rem:smooth}
\end{remark}

\section{Particle Filtering and Smoothing for Delayed Interception}
\label{sec:est}
The guidance law developed in Sec.~\ref{sec:GL} requires a
properly-timed estimate of the state, along with an estimate of the
inherent time-delay associated with this estimate. We do this by using
the following tools. The bulk of the estimation workload is performed
by an interacting multiple model particle filter
(IMMPF)~\cite{blom_exact_2007}. We conjoin the IMMPF with a
semi-Markov transition mechanism, which is used to estimate the
uncertainty interval following a target evasive maneuver, and the
associated time delays. Finally, we properly feed the guidance law
with a delayed state estimate, which is computed (given the estimated
time delays) using a fixed-lag particle-based smoother.
\subsection{The IMMPF Algorithm}
The IMMPF algorithm~\cite{blom_exact_2007} is a multiple model
sequential MC estimator, that can address problems characterized by
nonlinear dynamics, non-Gaussian driving noises, and multiple models
with non-Markovian switching modes. This algorithm generally follows
the logic of the interacting multiple model (IMM)
estimator~\cite{blom_interacting_1988}; in the IMMPF, the resampling
stage is combined with the interaction stage before the prediction and
filtering stages. The filter runs a bank of particle filters (PFs)
matched to all possible modes.  At each time $t_{k}$ the probability
density function (PDF) is represented by the set of particles and
associated weights
$\{ \textbf{x}_{k}^{r,s}, w_{k}^{r,s} \mid r\in\mathbb{M}, s\in \{1,
\dots, S\}\} $, where $\textbf{x}_{k}^{r,s}$ is the state vector of
particle $s$ and mode $r$, and $w_{k}^{r,s}$ is its associated weight.
$\mathbb{M}$ is the set of $M$ discrete modes, and $S$ is the number
of particles of each mode, rendering the total number of particles
$N_{p}=MS$.

\subsection{Uncertainty Interval Estimation}
For insight and context, we first review an existing off-line
analytical framework commonly used to model the generation of evasion
maneuver-caused detection delays. We then present a probabilistic
model of the uncertainty interval, and use it to develop a method for
estimating this interval using semi-Markov modeling of the target's
evasion maneuvers.
\subsubsection{Simplified Modeling of the Detection Delay}
\label{sec:simp_mod}
Depicted in Fig.~\ref{fig:evas_man}, Hexner, Weiss, and
Dror~\cite{hexner_temporal_2008} proposed a simplified physical model
for the generation of time delay in a linearized system driven by
Gaussian measurement noises.  The model assumes that the evader has a
first-order autopilot dynamics, but it can be easily modified to suit
any linear time-invariant system of arbitrary
order~\cite{shaferman_near-optimal_2021}. In Fig.~\ref{fig:evas_man},
the change in the evasion maneuver, $\Delta a_{E}^{c}$, goes through
the evader's autopilot dynamics and then through a double integrator
which translates the estimation error of the evader acceleration
normal to the LOS, $\tilde{a}_{E}^{n0}$, to the estimation error for
the relative separation perpendicular to the LOS, $\tilde{\xi}$.
\begin{figure}[tbh]
  \centering \includegraphics[width=3.25in]{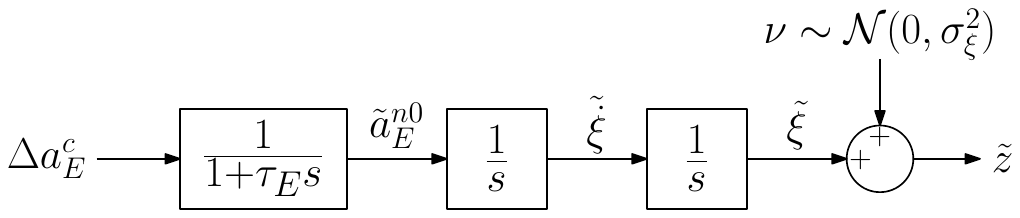}
  \caption{Simplified physical model for estimation delay
    generation.~\cite{hexner_temporal_2008}}
  \label{fig:evas_man}
\end{figure}

The sudden evasion maneuver changes are modeled using a step response,
which translates to the following approximation, holding for any brief
duration $ t_{r} = t - t_{sw} $ after the sudden evasion change:
\begin{subequations}
\label{eq:HWD_app}
	\begin{align}
		\tilde{a}_{E}^{n0}(t_{r}) & = \Delta a_{E}^{c} [1 - \exp(-t_{r} / \tau_{E})] \approx \frac{\Delta a_{E}^{c} t_{r}}{ \tau_{E}}, \\
		\tilde{\dot{\xi}}(t_{r})  & \approx \frac{\Delta a_{E}^{c} t_{r}^{2}}{2\tau_{E}},                                               \\
		\tilde{\xi}(t_{r})        & \approx \frac{\Delta a_{E}^{c} t_{r}^3}{6\tau_{E}}.
	\end{align}
\end{subequations}
Setting a threshold, $k$, for the number of standard deviations
required for the estimator to detect the maneuver change, then yields
the estimation error at detection,
\begin{equation}
	\tilde{\xi} = k_{\xi} \sigma_{\xi},
\end{equation}
where a common practice is to set
$k_{\xi} = 2$~\cite{hexner_temporal_2008}. Denoting by $\Delta t_{r}$
the size of the interval beginning at maneuver change and ending at
maneuver detection, we thus have
\begin{equation}
\Delta t_{r} \approx \left( \frac{6 \tau_{E} k_{\xi} \sigma_{\xi}}{\Delta a_{E}^{c}} \right)^{1/3}.
\end{equation}
Assuming small deviations from the collision course yields the
approximation
$\xi \approx (\lambda - \lambda_{0})
\rho$~\cite{shaferman_stochastic_2016}. Furthermore, realistically
assuming a relatively small range estimation error yields
\begin{equation}
	\sigma_{\xi} \approx \rho \sigma_{\lambda}
\end{equation}
where $\sigma_{\lambda}$ is the standard deviation of the bearing
measurement noise, Eq.~\eqref{eq:Meas}. This finally yields the
approximation
\begin{equation}
	\Delta t_{r} \approx \left( \frac{6 \tau_{E} k_{\xi} \rho \sigma_{\lambda}}{\Delta a_{E}^{c}} \right)^{1/3}
	\approx \left( - \frac{6 \tau_{E} k_{\xi}  V_{\rho} t_{go} \sigma_{\lambda}}{\Delta a_{E}^{c}} \right)^{1/3}
	\label{eq:theta_approx}
\end{equation}
where the time-to-go approximation from Eq.~\eqref{eq:t_go} is used in
the second transition.

The approximation~\eqref{eq:theta_approx} provides some essential
insights regarding the uncertainty interval: 1) it is time-varying, 2)
it can be modeled as
\begin{equation}
	\Delta (t_{go}) = a + b t_{go}^{1/3},
	\label{eq:delta_model}
\end{equation}
implying that $\omega = 1 / 3$ for the models in
Eq.~\eqref{eq:delta_form}. Furthermore, we set $a = 1 / f$ because if
the evader changes its maneuver at the final time-step of the
engagement, the estimator cannot detect that, as the effect of this
change is negligible.

The approximation of the uncertainty interval, as presented above,
suffers from two notable shortcomings. First, it completely disregards
the information acquired in real-time throughout the
engagement. Second, it is only valid for linear models and Gaussian
noise distributions. In what follows we propose a novel approach for
estimating the uncertainty interval, that overcomes both shortcomings
by using the outputs of the IMMPF algorithm. We commence the
exposition by defining the uncertainty interval in mathematical terms.
\subsubsection{Probabilistic Modeling of the Uncertainty Interval}
\label{sec:prob_uncertain}
Consider the interval $[t, t_k]$, where $t_k$ is the present time, and
$t<t_k$.  For any $t_k$, the \emph{uncertainty interval}, denoted by
$[t^\star,t_k]$, is determined by its lower bound, $t^\star$, which is
defined as the time maximizing the size of the interval,
$t_k - t^\star$, subject to the condition that any evasive maneuver
executed by the target at a time preceding $t^\star$ has already been
detected by the pursuer's estimator w.p.~1.  In contradistinction, an
evasive maneuver occurring at $t\in [t^\star,t_k]$ (inside the
uncertainty interval) can only be detected by the estimator with
probability strictly smaller than 1, implying that maneuver detection
cannot be guaranteed inside the uncertainty interval.

Formally speaking, let $M(t)$ be any continuous random variable
measuring the estimator's maneuver detection vulnerability, such that
the probability that the pursuer's estimator has \emph{not} detected a
target evasive maneuver in the interval $[t_1,t_2]$ is
\begin{equation}
  \label{eq:no_det}
  \ProbM{t_1}{t_2} = \int_{t_1}^{t_2}f_M(\xi)\textrm{d}\xi,
\end{equation}
where $f_M$ is the PDF of $M$.  We notice that
\begin{equation}
  \label{eq:f_M_supp}
  \ProbM{t_1}{t_2} = 0 \iff [t_1,t_2] \cap \supp f_M(t) = \varnothing  
\end{equation}
where $\supp f$ denotes the support of $f$.

Now, according to our model, the time $t^\star$ is the minimal time
for which, for any $t<t^\star$, we have $\ProbM{t}{t^\star} = 0$ (that
is, maneuver detection w.p.~1 for any interval preceding the
uncertainty interval). In contradistinction, for any
$t \in [t^\star, t_k]$, we have $\ProbM{t^\star}{t} > 0$.  Along with
Eq.~\eqref{eq:f_M_supp}, this then motivates the following definition
of the uncertainty interval:
\begin{equation}
  \label{eq:uncert_def}
  [t^\star,t_k] \dfn \supp f_M(t),
\end{equation}
which also serves as an implied definition of $t^\star$.

\subsubsection{Semi-Markovian Switching Model}
\label{sec:SMC}
To estimate the size of the uncertainty interval we model the target's
maneuver switching mechanism as a semi-Markov chain process. We note
that this modeling is admissible, as the IMMPF algorithm is not
limited to strictly Markovian mode switching mechanisms, thus
permitting its use when the transition probability matrix (TPM) is
state-dependent.  Accordingly, the transition probability from mode
$i$ at time $t_{k-1}$ to mode $j$ at time $t_{k}$, $p_{ij}^{k}$, is
expressed as
\begin{equation}
  p_{ij}^{k} (\theta) = \Pr (r_{k} = j \vert r_{k-1} = i , \theta_{k-1} = \theta)
\end{equation}
where $r_{k}$ denotes the evader's maneuver mode at time $t_{k}$ and
$\theta$ is the sojourn time. The sojourn time at time $t_{k}$,
$\theta_{k}$, represents the elapsed time from the most recent modal
switching by the evader. We note that the notion of the sojourn time
resembles the notion of the elapsed time from the abrupt evasion
change, $t_{r}$, which we have used earlier in
Eq.~\eqref{eq:HWD_app}. This allows us to use the sojourn time to
estimate the uncertainty interval in the sequel.

Adding the sojourn time as an auxiliary variable to the pursuer's
state vector, Eq.~\eqref{eq:StateVec1}, results in the following
augmented state vector
\begin{equation}
	\textbf{x}_{P} = \begin{bmatrix}
		\rho & \lambda & \gamma_{E} & a_{E} & \theta
	\end{bmatrix}^{T}.
	\label{eq:AugStateVec}
\end{equation} 
Following~\cite{blom_hybrid_1991}, the discrete-time dynamic equation
associated with the sojourn time is 
\begin{equation}
	\theta_{k} = \mathbbm{1}(r_{k},r_{k-1}) \theta_{k-1} + 1 / f
	\label{eq:theta}
\end{equation}
where $f$ is the measurement rate in Hertz, and $ \mathbbm{1}$ is the
indicator random variable,
\begin{equation}
	\mathbbm{1}(r_{k},r_{k-1}) = \begin{cases}
		1 , & r_{k} = r_{k-1} \\ 
		0 , & r_{k}\neq r_{k-1}
	\end{cases}.
\end{equation}
\subsubsection{Estimation of the Uncertainty Interval}
Prior to addressing the problem of uncertainty interval estimation
using the semi-Markov model, we define the notion of mode domination,
as follows.
\begin{definition}
  Mode $r$ is said to \emph{dominate} all other modes at time $t_k$ if
  its modal probability exceeds a predetermined threshold,
  $W_{\text{Thres}}$.
\end{definition}
Within the IMMPF framework, the modal probability of a particular mode
is approximated via the sum of the weights of the particles associated
with that mode. Thus, if mode $r$ is dominant, then
\begin{equation}
  \sum_{s=1}^{S} w_{k}^{r,s} > W_{\text{Thres}}
\end{equation}
      
With this definition on hand, we address the problem of uncertainty
interval estimation in two separate cases.
\begin{description}
\item[A dominant mode exists.]  Let mode $r\in\mathbb{M}$ be the
  dominant mode at time $t_k$, and denote the set of all other modes
  by $\Mcr = \mathbb{M}\setminus \{r\}$. In addition, let the PDF of
  the sojourn time, $\theta_k$, assuming that the true mode is a
  member of $\Mcr$, be denoted by
  $f_{\theta_k\mid\Mcr}(\theta)$. Because a sojourn time $\theta$
  (assuming that the true mode is a member of $\Mcr$) means that the
  target has performed, at time $t_k - \theta$, an evasive maneuver
  that has not yet been detected by the estimator (as mode $r$ is
  dominant), the event $\theta_k \in [\alpha,\beta]$ is equivalent to
  the event ``an undetected evasive maneuver has occurred in the time
  interval $[t_k-\alpha, t_k-\beta]$'', and, consequently, its
  probability can be computed as
  \begin{equation}
    \label{eq:prob_fr}
    \ProbM{t_k-\alpha}{t_k-\beta} =
    \int_{\alpha}^{\beta} f_{\theta_k \mid \Mcr}(\xi)\mathrm{d}\xi
  \end{equation}
  Comparing Eq.~\eqref{eq:prob_fr} and Eq.~\eqref{eq:no_det}, we
  conclude that the PDF $f_{\theta_k \mid \Mcr}$ fulfills the same \rev{role} as $f_M$ in defining the probabilistic characteristics of the
  estimator's detection process. Thus, per Eq.~\eqref{eq:uncert_def},
  the uncertainty interval in this case is $[t_k-\theta^\star_k,t_k]$,
  where $\theta^\star_k$ satisfies
  \begin{equation}
    \label{eq:uncert_def_dom}
    [0, \theta^\star_k] = \supp f_{\theta_k \mid \Mcr}(\theta)
  \end{equation}

  Now, within the IMMPF framework, the PDF $f_{\theta_k \mid \Mcr}$ is
  approximated (via samples and their associated weights), by all
  particles \emph{not} associated with mode $r$ at time $t_k$. Thus,
  observing the sojourn times of all such particles at time $t_k$, an
  estimate of $\theta^\star$ is
\begin{equation}
\label{eq:theta_max_alt}
\hat\theta^\star_k = \max\{\theta_{k}^{r',s} \mid r' \in \Mcr, s= 1,\dots,S\}
\end{equation}
% as this is the maximal possible interval during which a switch to a
% different mode, $r' \neq r$, has occurred. 

To illustrate the idea underlying this approach, consider a two-mode
case, where, at a particular time $t_k$, the modal probability of mode
1 is 0.985, rendering it the dominant mode. Figure~\ref{fig:theta_est}
presents a histogram approximation of the posterior density of
$\theta_k$ conditioned on assuming that mode 2 is true,
$f_{\theta_k\mid r_k=2,\meas}$. Clearly, in this case
\begin{equation}
  \label{eq:examp_supp}
  \supp f_{\theta_k\mid r_k=2,\meas} = [0, 0.28]
\end{equation}
yielding $\hat\theta^\star = 0.28$, as marked by the red dashed line
in the figure.

In practice, greedily shooting for the maximal value of the sojourn
time, as per Eq.~\eqref{eq:theta_max_alt}, renders $\theta^\star$
highly volatile, as the maximal sojourn time might be determined by a
particle of a nearly zero weight. An alternative approach,
circumventing this volatility, is to soften the approach of
Eq.~\eqref{eq:theta_max_alt} by estimating $\theta^\star$ as the
solution of a constrained optimization problem, as follows:
	% \begin{align}
	% 	\hat{\Delta}_{k}  = \; & \min \theta \notag \\ 
	% 	& \text{s.t. }  \Pr(\theta_{k} \leq \theta,
	% 	\mod r_{k} \neq r) \geq p_{\text{Thres}}	
	% 	\label{eq:del_hat}
	% \end{align}
%
\begin{align}
  \label{eq:del_hat}
  \theta^\star & = \min\theta\notag \\
                 & \text{s.t. } \Pr(\theta_k^{r^\prime,s} \leq \theta \mid
                 r^\prime \in \Mcr, s=1,\dots,S) \geq p_{\text{Thres}}
\end{align}
where $p_{\text{Thres}}$ is a predetermined threshold. Considering
again the example from Fig.~\ref{fig:theta_est} and setting
$p_{\text{Thres}} = 0.99$, the solution of \eqref{eq:del_hat} results
in $\hat\theta^\star_{k} = 0.26$, as marked by the black dotted line in
Fig.~\ref{fig:theta_est}.
\begin{figure}[tbh]
  \centering \includegraphics[width=3.25in]{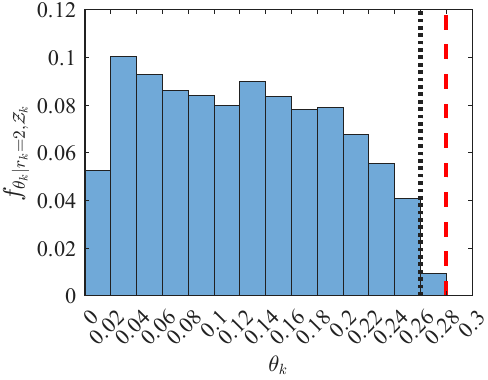}
  \caption{Histogram approximation of the posterior density of
    $\theta_k$ conditioned on assuming that mode 2 is true, when mode
    1 is dominant. Red dashed line: greedy estimate of $\theta^\star$
    per Eq.~\eqref{eq:theta_max_alt}. Black dotted line: soft estimate
    of $\theta^\star$ per Eq.~\eqref{eq:del_hat}, with
    $p_\text{Thres}=0.99$.}
  \label{fig:theta_est}
\end{figure}
\item[No dominant mode exists.]  In this case, the method developed
  above cannot be used, but Eq.~\eqref{eq:theta_approx} can still
  provide a sufficiently good approximation (albeit having its cons,
  as elaborated earlier).  The approximation~\eqref{eq:theta_approx}
  can also be useful in the initialization stage of the engagement,
  when no real-time data can be used to infer on the uncertainty
  interval. Yet another option is to use the latest available estimate
  of the uncertainty interval, if available, and propagate it in time
  using Eq.~\eqref{eq:delta_model}.
\end{description}

Algorithm~\ref{alg:del_est} presents a schematic description of one
cycle of the algorithm for uncertainty interval estimation. The
algorithm is initialized by setting the initialization indicator to 1.
\begin{algorithm}
	\SetAlgoNoLine
	\SetAlgoLongEnd
	\DontPrintSemicolon 
	Obtain the particles and associated weights of the current
        time-step \;
	\eIf{$\exists r$ s.t. $\sum_{s=1}^{S} w_{k}^{r,s} > W_{\text{Thres}}$}{
		Solve Eq.~\eqref{eq:del_hat}
		\If{initialization = 1}{  initialization $\gets$ 0}
	}{
		\eIf{initialization = 1}{Use Eq.~\eqref{eq:theta_approx}}{Propagate the model using Eq.~\eqref{eq:delta_model}}
	}
	\caption{Uncertainty interval estimation.}
	\label{alg:del_est}
\end{algorithm}
\subsection{Target Tracking IMMPF}
\label{subsec:TTIMMPF}
The implementation of the IMMPF in our problem makes use of the models presented in Sec.~\ref{sec:Prob_Form} and~\ref{sec:SMC}. $M$ discrete modes are used for the evader acceleration command as presented in Assumption~\ref{ass6}. We recast the nonlinear EOM in Eq.~\eqref{eq:2} in discrete-time as
\begin{equation}
	\textbf{x}_{k} = \textbf{f}_{k-1} (\textbf{x}_{k-1},r_{k},\textbf{w}_{k})
	\label{eq:propa}
\end{equation}
where $\textbf{x}_{k}$ is the pursuer relative state at time $t_{k}$ and $r_{k}$ is the evader mode during the time interval $( t_{k-1}, t_{k}]$.  Equation~\eqref{eq:theta} constitutes the dynamic equation of the sojourn time.  The process noise $\textbf{w}_{k}$ is treated as a tuning parameter of the filter.  The filtering step uses the measurement equation~\eqref{eq:Meas} to calculate the likelihood density $p(\textbf{z}_{k} \vert \textbf{x}_{k},r_{k})$.
\subsection{Particle Smoothing Algorithm}
As one of its constitutive assumptions, the DGLCC law requires, at
each time $t$, information about $\dot{\xi}$ and $a_{E}$, delayed by
the (known) delays $\Delta_{1}(t)$ and $\Delta_{2}(t)$,
respectively. This requirement notwithstanding, common practice, as
in~\cite{shinar_nonorthodox_2002,glizer_linear_2009}, is to use the
current-time estimates, $\hat{\dot{\xi}}(t)$ and $\hat{a}_{E}$. To
examine the validity of this particular implementation, consider
Fig.~\ref{fig:xi_dot}, that presents the estimated and true values of
$\dot{\xi}$ with respect to $t_{go}$ when the evader changes its
maneuver at $0.6$ seconds before the end of the engagement. Evidently,
apart from a particular interval after the evader changes its
maneuver, the estimate of $\dot{\xi}$ is, essentially, not delayed,
demonstrating that considering it delayed throughout the entirety of
the engagement would be erroneous.
\begin{figure}[tbh]
  \centering \includegraphics[width=3.25in]{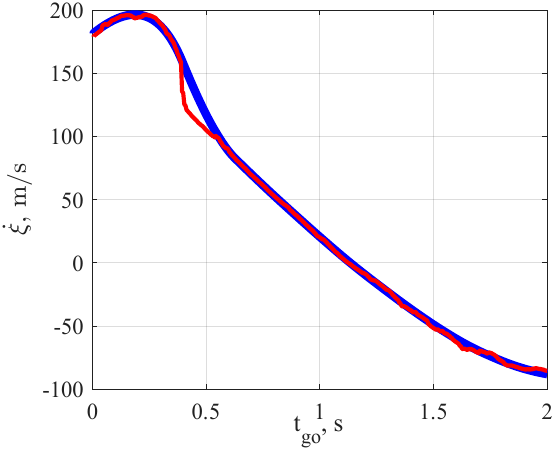}
  \caption{Estimated (red) and true (blue) $\dot{\xi}$ vs
    $t_{go}$. The evader changes its maneuver at $t_{go}= 0.6$~s.}
  \label{fig:xi_dot}
\end{figure}

Thus, common practice notwithstanding, we conclude that current-time (filtered) estimates should not be used to drive the DGLCC law. Instead, we propose using a fixed-lag smoother, which facilitates driving the guidance law with optimal delayed information, as it requires.  Because the uncertainty interval is usually relatively small, we select the fixed-lag approximation smoother~\cite{kitagawa_monte_1996}, which is computationally efficient, and is known to perform well for short lags.  This smoother only requires storing the previous state estimates, and the posterior smoothed states are approximated using the current time weights, i.e., at time $t_{p}$, the posterior PDF of the states at time $t_{k}$, where $p \geq k$, is
\begin{equation}
	p\left(x_{k} \mid \mathcal{Z}_{p} \right)\sim
	\sum_{r=1}^{M} \sum_{s=1}^{S} w_{p}^{r,s} \delta \left(x-x_{k}^{r,s} \right)
\end{equation}
where $\mathcal{Z}_{p}$ is the set of available measurements at $t_p$.

\section{Design Considerations and Schematic Overview}
\label{sec:design}
In the previous sections we have presented the main components of our
comprehensive approach: the extended DGLCC law (handling time-varying
delays); the specially designed estimator for this case; and the
fixed-lag smoother.  However, there are still two design aspects that
deserve consideration. The first issue is that the guidance law and,
consequently, the smoother require two time delays,
$\Delta_{1}(t_{k})$ and $\Delta_{2}(t_{k})$, whereas the estimator
intrinsically provides only the estimated lower bound of the
uncertainty interval, $\hat{\theta}^{\star}_{k}$.  Thus, a mapping
from $\hat{\theta}^{\star}_{k}$ to the two guidance-law time delays is
needed; this mapping is presented in this section. The section also
addresses the implementation of the linearized guidance law in a
nonlinear setting.
\subsection{Determination of the DGLCC Time Delays}
\label{sec:C_tun}
In Sec.~\ref{sec:GL}, we have presented a modified version of the
DGLCC law with two time-varying time delays. From
Eqs.~\eqref{eq:J1}--\eqref{eq:J2}, the guaranteed miss distance in the
singular region is
$J^{\star}_{cc} = \int_{0}^{\tau_{s}} R(\tau)\,d\tau$, where
$R(\tau) = \mu\Psi(\tau) - A(\tau)$. Since $A(\tau)$ in
Eq.~\eqref{eq:A_tau} increases monotonically with both $\Delta_{1}$
and $\Delta_{2}$, larger delays decrease $R(\tau)$, shrink the
singular-region boundary $\tau_{s}$, and thus increase the guaranteed
miss distance. This monotonic dependence of the game value on the
delay magnitude was established analytically for the constant-delay
case in~\cite{shinar_solution_1999}, and extends qualitatively to the
time-varying case.  In summary, smaller delays are associated with
better interception performance.  However, setting the time delays too
small may degrade the smoother's performance.  The fixed-lag smoother
retrieves state estimates from the interval
$[t_{k} - \Delta_{1}(t_{k}),\; t_{k}]$.  If $\Delta_{1}(t_{k})$ is set
smaller than the true uncertainty interval, i.e., smaller than the
elapsed time since the last undetected maneuver switch, the smoother's
look-back window falls inside the post-switch transient, where the
filtered estimate carries the largest errors due to the
as-yet-undetected maneuver. In that regime, the smoothed estimate of
$\dot{\xi}$ offers marginal or no improvement over the filtered
estimate, and the resulting estimation errors increase the actual miss
distance.  Thus, a tension arises: reducing $\Delta_{1}$ improves the
game-theoretic guaranteed miss distance; however, if it is set below
the true uncertainty interval, it causes the smoother to inject
erroneous information, resulting in an increased realized miss.

Figure~\ref{fig:xi_dot_and_a_t_est} shows the behavior over time of the estimated and true $\dot{\xi}$, and $a_{E}^{n0}$, during an
engagement when the evader changes its maneuver 2 seconds after the engagement has begun. Whereas the estimation error of the evader acceleration increases immediately after the maneuver change, it takes approximately $0.1$ seconds for the estimation error of $\dot{\xi}$ to become significant.  This behavior can be explained using the
simplified model from Sec.~\ref{sec:simp_mod}:
\begin{equation}
	\tilde{a}_{E}^{n0}(t_{r}) \approx \frac{\Delta a_{E}^{c} t_{r}}{ \tau_{E}}, \qquad
	\tilde{\dot{\xi}}(t_{r})   \approx \frac{\Delta a_{E}^{c} t_{r}^{2}}{2\tau_{E}}.
	\label{eq:delay_err}
\end{equation}
For $t_{r} \ll 1$ we have $t_{r} \gg t_{r}^{2} / 2$ and the estimation
error of $a_{E}^{n0}$ is the dominant error in the first time steps.
\begin{figure}[tbh]
  \centering
  \includegraphics[width=3.25in]{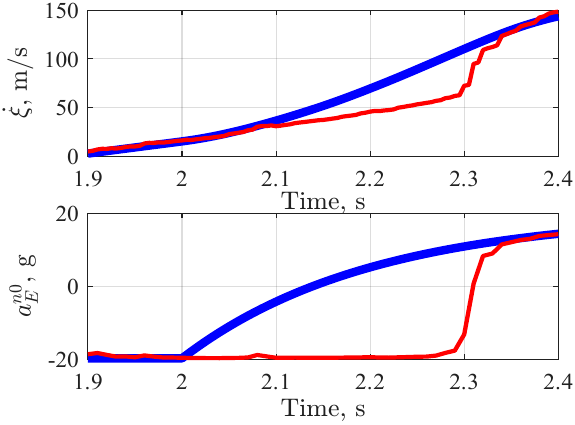}
  \caption{Estimated (red) and true (blue) $\dot{\xi}$ and
    $a_{E}^{n0}$ vs time.  The evader changes its maneuver at
    $t=2$~s.}
	\label{fig:xi_dot_and_a_t_est}
\end{figure}

Because the estimation error of $a_{E}^{n0}$ is significant even after
a few time steps, we set
\begin{equation}
	\Delta_{2}(t_{k}) =\hat{\theta}^{\star}_{k},
\end{equation}
thus ensuring that the smoother covers the entire uncertainty
interval. %\YO{YO:  notation?  Here and in the sequel.}

As for $\Delta_{1}(t_{k})$, the upper panel of Fig.~\ref{fig:xi_dot_and_a_t_est} shows that it can be set to be smaller than $\hat{\theta}^{\star}_{k}$. Thus, we propose the following model
\begin{equation}
	\Delta_{1}(t_{k}) = C\hat{\theta}^{\star}_{k}, \quad C \in [0,1]
    \label{eq:C_constant}
\end{equation}
where $C$ is a proportionality constant that has to be determined. To
avoid the enormous computational burden associated with an extensive
Monte Carlo (MC) simulation study, we use, instead, a much more
computationally efficient nonlinear deterministic simulation, based on
the nonlinear dynamics of Sec.~\ref{sec:nonlinear_dyn} and the
guidance law of Eq.~\eqref{eq:DGLCC_GL}. Moreover, we assume, for
simplicity, that the target performs only a single bang-bang maneuver,
i.e., $M=2$ during the engagement.  The pursuer is assumed to possess
perfect information, except for the information about $\dot{\xi}$ and
$a_{E}$, which is delayed according to the simplified model of
Eq.~\eqref{eq:theta_approx}. To model the effect of $C < 1$, the
information about $\dot{\xi}$ is contaminated with a bias according to
the simplified model of Eq.~\eqref{eq:delay_err} when
$t_{sw} + \Delta_1(t) < t < t_{sw} + \Delta_2(t)$,
i.e., %\YO{YO: notation}
when the smoother goes back to a time after the maneuver change but
before the estimator has detected this change. For this simulation
environment, we divide the interval $[0,1]$ into $s_{C}$ different
cases, and divide the interval $[0,t_{f}]$ into $s_{sw}$ different
cases, and then examine all possible $s_{C}\times s_{sw}$
combinations. Accordingly, we set $C$ to yield the best worst-case
performance:
\begin{equation}
	C = \arg \min_{C\in [0,1]} \max_{t_{sw} \in [0,t_{f}]} \bar{z}_{cc} (0)
	\label{eq:C_minmax}
\end{equation}
\subsection{Nonlinear Implementation of DGLCC}
The DGLCC guidance law has been developed in a linearized environment,
but the problem addressed in this paper has nonlinear dynamics, as
presented in Sec.~\ref{sec:Prob_Form}. The implementation of the
guidance relies on the ZEM, which is expressed in dimensional form as
\begin{align}
  z_{cc}(t) & = \hat{x}_{1}(t) + t_{go} \hat{x}_{2}(t - \Delta_{1}(t)) - \tau_{P}^{2} \Psi(t_{go}/\tau_{P}) x_{3}(t) - t_{go} \int_{t - \Delta_{1}(t)}^{t} x_{3}(s)ds
              \notag\\
            & + \tau_{E}^{2} \exp(-\Delta_{2}(t)/\tau_{E})[t_{go} \exp(\Delta_{1}(t)/\tau_{E}) / \tau_{E} + \exp(-t_{go}/\tau_{E}) - 1] \hat{x}_{4}(t - \Delta_{2}(t))
\end{align}
and the time-to-go is approximated using Eq.~\eqref{eq:t_go}. All the required variables are obtained using the following relations
\begin{subequations}
	\begin{align}
		x_{1} & \dfn \xi  = \rho \sin \lambda \\
		x_{2} & \dfn \dot{\xi}  = V_{\rho} \sin \lambda - V_{\lambda} \cos \lambda \\
		x_{3} & \dfn a_{P}^{n}  = a_{P} \cos \delta_{P} \\
		x_{4} & \dfn a_{E}^{n}  = a_{E} \cos \delta_{E} 
	\end{align}
\end{subequations}
where we use the instantaneous LOS for the linearization instead of
the initial LOS. The states $x_{1}, x_{2}$, and $x_{4}$ are obtained
from the estimator and the smoother using the appropriate time
indices, and, according to assumption~\ref{ass2}, $x_{3}$ is perfectly
known.

\subsection{Schematic Overview}
\label{sec:overview}
% Figures~\ref{fig:stand_over} and~\ref{fig:novel_over} present a
% schematic overview of the entire proposed algorithm and compare it
% with the standard implementation of guidance
% laws. 
% Figure~\ref{fig:stand_over} depicts the standard implementation of
% DGLCC, where the measurement drives the estimator, which then feeds
% the DGLCC law with the current time (filtered) state estimate. The
% DGLCC guidance law uses the current time estimate % as certainty
% % equivalent information 
% to generate an acceleration command for the
% pursuer. In this standard approach, the estimate is considered delayed
% throughout the entire engagement, and the time delays only serve as
% tuning parameters.
% %%
% \begin{figure}[tbh]
%   \centering
%   \includegraphics[width=3.25in]{block_reg.pdf}
%     \caption{Schematic overview of the standard implementation of the
%       DGLCC guidance law in a stochastic setting.}
%     \label{fig:stand_over}
% \end{figure}

Figure~\ref{fig:novel_over} presents the novel tracking and
interception strategy developed in this work. This scheme uses the
IMMPF estimator, which % can deal with nonlinear hybrid dynamics,
% non-Gaussian driving noises, and a non-Markovian switching
% mechanism. Moreover, the IMMPF
provides an approximation of the entire posterior PDF of the state,
to % . Using this approximate PDF, we
estimate the uncertainty interval in real-time. This provides
real-time estimates of the two time-varying time delays that are
required by the extended DGLCC law. The time delays and the outputs of
the IMMPF are used by the fixed-lag smoother to provide
correctly-timed information about delayed states. Finally, the DGLCC
guidance law uses the information from the IMMPF, the estimated time
delays, and the smoothed states to produce the pursuer's acceleration
command.
\begin{figure}[tbh]
  \centering \includegraphics[width=3.25in]{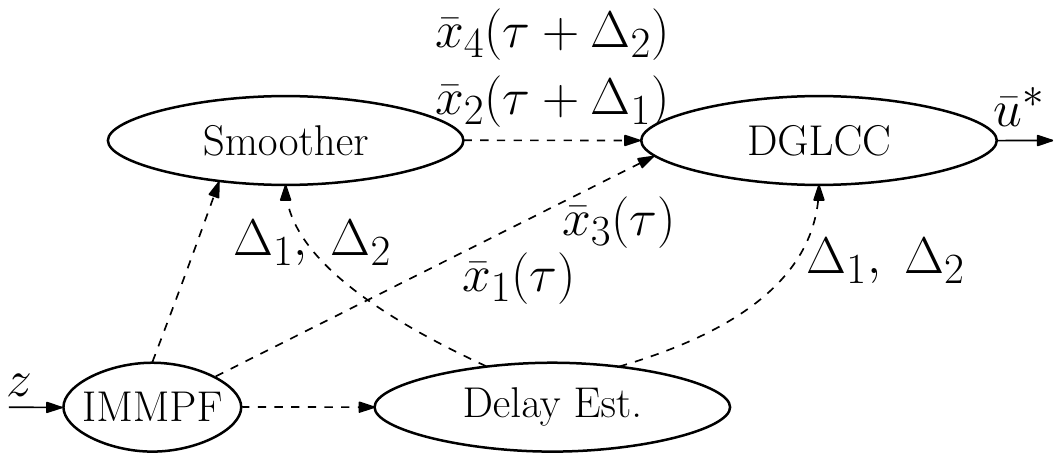}
  \caption{Schematic overview of the novel implementation of the DGLCC
    guidance law in a stochastic setting.}
  \label{fig:novel_over}
\end{figure}

\section{Simulation Study}
\label{sec:sim}
To evaluate the performance of the tracking and guidance scheme
developed in this paper we conduct a numerical simulation study, where
we adopt the interception scenario
of~\cite{shaferman_stochastic_2016}.
Three guidance laws are compared via extensive MC simulations, namely:
DGL1, DGLC, and DGLCC with time-varying delays (TV-DGLCC).
\subsection{Simulation Scenario}
\label{subsec:param}
A ballistic missile defense scenario is considered, which
includes a single pursuing missile and a single, highly maneuverable,
evading target. For simplicity, we assume that the target performs a
maximum acceleration bang-bang evasion maneuver with a single command
switch during the engagement, i.e., $M=2$ in this case.

The evader is initialized in the -$Y_{I}$ direction with a flight-path
angle of $\gamma_{E}(0)=-\pi/2$~rad, and the pursuer's flight-path is
chosen such that the interceptor velocity vector points toward the
initial target location. The pursuer's and evader's maneuver
capabilities are $a_{P}^{\max}=45$~g and $a_{E}^{\max}=20$~g,
respectively; their first-order time constants are
$\tau_{P}=\tau_{E}=0.2$~s, and their speeds are
$V_{P}=V_{E}=2500$~m/s, rendering a nominal engagement time of $3$~s.
The measurement noise is Gaussian distributed with
$\nu \sim \mathcal{N}(0,\sigma^{2})$ with $\sigma=0.5$~mrad; the sensor's sampling rate is $f=100$~Hz.
\subsection{Filter Initialization}
The pursuer's filter is assumed to be initialized by a
radar. Figure~\ref{fig:init-radar-geometry} presents a schematic view
of the assumed planar geometry of the initializing radar, the pursuer,
and the evader. An additional subscript R denotes variables associated
with the radar. The slant range between the radar and the evader is
denoted by $\rho_{R}$, and $\lambda_{R}$ is the angle between the
radar's LOS to the evader and the $X_{I}$ axis.
\begin{figure}[tbh]
  \centering \includegraphics[width=3.25in]{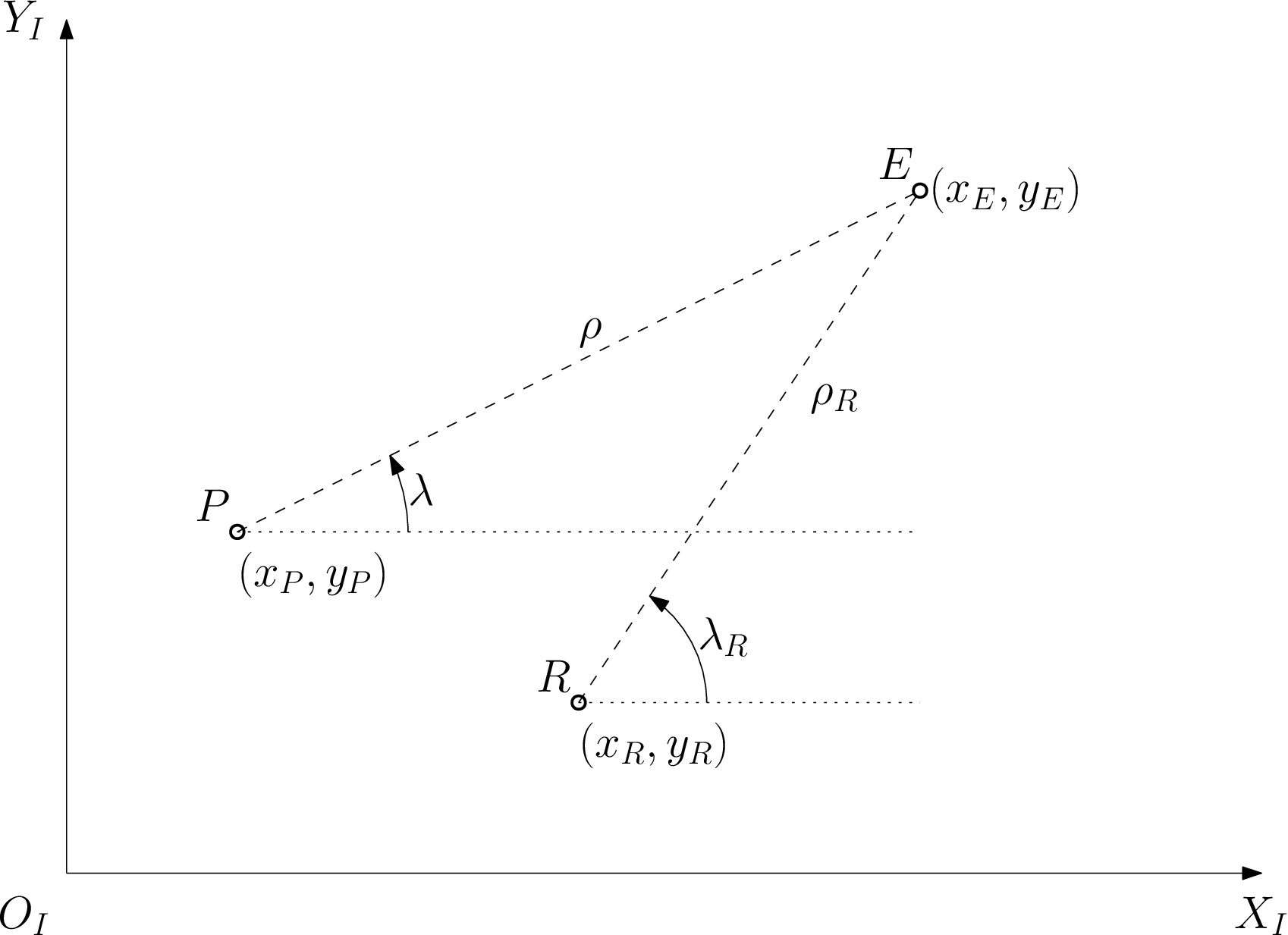}
  \caption{Initializing radar geometry}
  \label{fig:init-radar-geometry}
\end{figure}
The radar is assumed to estimate the initial vector
\begin{equation}
  \hat{\textbf{x}}_{R} = \begin{bmatrix}
    \rho_{R} & \lambda_{R} & \gamma_{E} & a_{E} & \theta
  \end{bmatrix}^{T}.
  \label{eq:RadarSV}
\end{equation}
The radar's posterior estimate is then passed on to the pursuer at
initialization. We further assume that the radar's position
$(x_{R},y_{R})$ and the pursuer's position at the initialization,
$(x_{P},y_{P})$, are known. We do not assume that posterior estimate
$\hat{x}_{R}$ is Gaussian distributed.  The geometric relation between
the pursuer's relative state and initial radar estimate is
\begin{subequations}
  \begin{align}
    \rho &= \left\lbrace \rho_{R}^{2} + \Delta R_{R}^{2} + 2\rho_{R} [\Delta X_{R} \cos(\lambda_{R}) + \Delta Y_{R} \sin(\lambda_{R})]\right\rbrace ^{1/2} \\
    \lambda &= \arctan \left[ \frac{\Delta Y_{R} + \rho \sin(\lambda_{R})}{\Delta X_{R} + \rho \cos(\lambda_{R})} \right]
  \end{align}
  \label{eq:RadarInit}
\end{subequations}
where
\begin{equation}
	\Delta X_{R} \triangleq x_{R} - x_{P}; \quad 
	\Delta Y_{R} \triangleq y_{R} - y_{P}; \quad 
	\Delta R_{R} \triangleq ( \Delta X_{R}^2 + \Delta Y_{R}^2 )^{1/2}
\end{equation}
The radar's estimate and the relations~\eqref{eq:RadarInit} are used
to initialize the state vector of Eq.~\eqref{eq:AugStateVec}.
\subsection{Design Considerations}
The IMMPF uses $4000$ weighted particles to approximate the posterior
state vector PDF, $2000$ particles per mode.  The modes are
initialized to be equally probable. The initial estimates of the first
four states in the state vector~\eqref{eq:AugStateVec} are chosen to
satisfy
\begin{equation}
	\hat{x}_{R}\sim\mathcal{N}(\bar{x}_{R},P_{R})
	\label{eq:RadarInitPDF}
\end{equation}
where $\bar{x}_{R}$ is the true initial state, and the associated initial covariance matrix is
\begin{equation}
	P_{R}=\text{diag} \{ 50^{2},(1\pi/180)^{2},(3\pi/180)^{2},10^{2} \}.
	\label{eq:P_R}
\end{equation}
The last state, $\theta$, is initialized based on the initial evader
acceleration, $a_{E}(0) = - 20$~g.  Thus, we set the particles of the
first mode, corresponding to $a_{E}^{c} = -20$ g, to be uniformly
distributed over the interval $[2.9,3.1]$~s, as they represent the
event that a relatively long time has passed since the latest modal
switch.  Following the same reasoning, the particles of the second
mode, corresponding to $a_{E}^{c} = 20$ g, are set to be
uniformly distributed over the interval $[0,0.2]$~s, as they represent
a recent modal switch that has not been detected yet.

The TPM is chosen to be
\begin{equation}
  \Pi = \begin{bmatrix}
    0.999 & 0.001 \\
    0.001 & 0.999
  \end{bmatrix}.
\end{equation}
The transition probabilities are chosen to be small, thus reducing the
estimation bias when there are no acceleration switches.  This,
however, comes at the price of slower detection of sudden evasion
maneuvers, and the eventual generation of considerable estimation time
delays.

For the estimation of the uncertainty interval, we set
$W_{\text{Thres}} = 0.9$ and $p_{\text{Thres}} = 0.99$. We then use
the following model for the time-varying delay
\begin{equation}
	\Delta_{2} (t_{go}) = 0.01 + b\, t_{go}^{1/3}.
\end{equation}

We use the tuning algorithm presented in Sec.~\ref{sec:C_tun} to
determine the proportionality constant $C$ of
Eq.~\eqref{eq:C_constant}. Figure~\ref{fig:C_tun} presents the maximum
value of the normalized miss distance vs $C$. The plot is obtained via
a numerical simulation involving 651 cases for all possible
combinations of 21 values of $C$ over $[0, 1]$, and 31 values of the
timing of the evasion maneuver over $[0, 3]$~s. As
Fig.~\ref{fig:C_tun} shows, the solution of the minimax
problem~\eqref{eq:C_minmax} is $C = 0.75$.
\begin{figure}[tbh]
  \centering \includegraphics[width=3.25in]{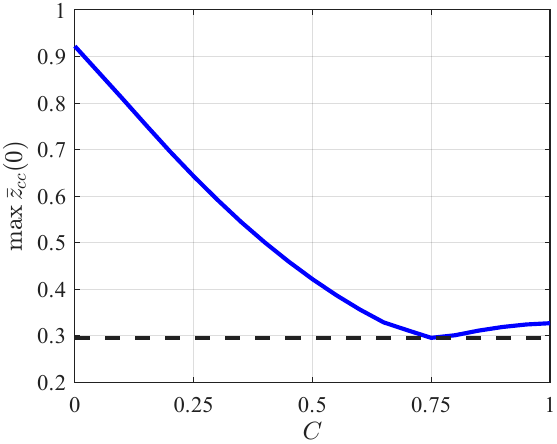}
  \caption{The maximal value of the normalized miss distance over all
    possible $31$ evasion maneuvers against all possible $21$ values
    for $C$.}
  \label{fig:C_tun}
\end{figure}

All guidance laws use a linear command inside their singular region to
prevent chattering, with $k = 0.7$. The DGLC guidance law uses a
constant time delay $\Delta t = 0.3$ s, tuned for improved performance
of the required warhead that guarantees a kill with probability 0.95.
% The estimation-aware version of DGLCC uses a horizon of a single
% time-step $h_{\max} = 0.01$ s.
%%
\subsection{Monte Carlo Analysis}
An MC simulation study is used to compare the closed-loop performance
of the interceptor using the following three guidance laws: DGL1, DGLC,
and DGLCC with time-varying delays (denoted TV-DGLCC). The evader
performs a single bang-bang evasion maneuver. To examine the effects of
the timing of this maneuver, we analyze maneuver times that cover the
entire duration of the engagement, set at 0.1~s intervals.  200 MC
runs are used for each switching time. This yields a total of 6,000
runs, which are performed for all the guidance laws.

Figure~\ref{fig:MD} shows the mean and standard deviation of the
pursuer's miss distance vs target maneuver time, using each of the
guidance laws.  The DGL1 law suffers the most from bang-bang evasion
maneuvers that are timed at the most challenging time window. This
stems from the fact that DGL1 does not account for the time delays
caused by the bang-bang maneuver. The DGLC guidance law also suffers
from this type of challenging maneuver. Although its performance is
better than that of DGL1, its mean miss distance is approximately five
times worse for $t_{sw} = 2.3$ s than for $t_{sw} \leq 1.5$ s, and the
standard deviation of the miss distance increases dramatically between
these cases. In the case of DGLC, this effect of the target's maneuver
timing is due to the neglect of the effect of the time delay on
$\dot{\xi}$. Moreover, in implementing the DGLC law, we have followed
the common practice, according to which the time delay is considered
constant and used as a tuning parameter. The estimate that drives the
law is obtained directly from the estimator, disregarding the design
requirement that it be delayed.

The TV-DGLCC guidance law is the least affected by the challenging
evasion maneuvers. Although its performance for $t_{sw} \leq 1.5$ is
worse than that of DGL1 and DGLC, it outperforms these two guidance
laws when challenged with better-timed evasion maneuvers.  Its mean
miss distance is approximately $1.75$ times higher for
$t_{sw} = 2.3$~s than for $t_{sw} \leq 1.5$ s, which shows that it,
too, is sensitive to the timing of the evasion maneuver. Nevertheless,
it is significantly less sensitive than both DGL1 and DGLC laws, with
mean miss distances that are $56$ and $5$ times greater, respectively,
for $t_{sw} = 2.3$ s. The TV-DGLCC law does show somewhat increased
miss distances when the target misjudges the timing of its evasive
maneuvers, but this minor drawback is a worthwhile trade-off for the
enhanced resilience against strategically timed maneuvers.
Furthermore, the variability of the miss distances under the TV-DGLCC
law, as indicated by their standard deviations (refer to the lower
panel of Fig.~\ref{fig:MD}), is the lowest among the laws tested. This
highlights the law's exceptional robustness concerning the target's
timing of its evasive maneuvers.
% YO: not sure I like the magenta color. Too close to red, in my
% opinion. Though I have added line types to caption.
%%
\begin{figure}[tbh]
  \centering
  \includegraphics[width=3.25in]{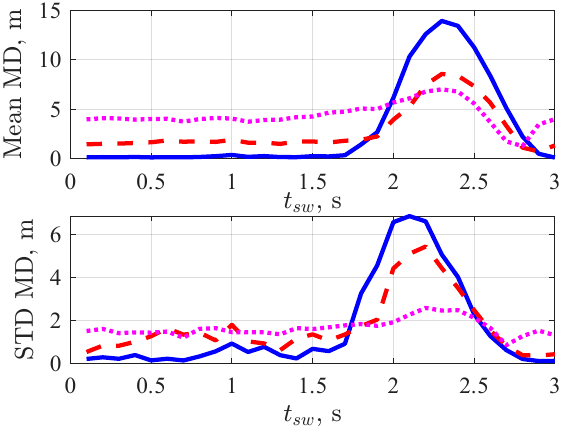}
  \caption{Miss distance mean (upper panel) and standard deviation
    (lower panel) vs evasion maneuver switching time in a nonlinear MC
    simulation. Guidance laws: DGL1 (solid blue), DGLC (dashed red),
    and TV-DGLCC (dotted magenta).}
	\label{fig:MD}
\end{figure}

Figure~\ref{fig:CDF} presents the miss distance cumulative
distribution function (CDF), computed based on all 6,000 MC runs. The
figure shows that the TV-DGLCC guidance law is superior to DGL1 and
DGLC for high values of miss distances, which correspond to the most
difficult (from the standpoint of the pursuer) interception
scenarios. To guarantee a kill with a probability of 0.95, DGL1
requires a warhead with a lethality radius of 15.7~m, and DGLC
requires a warhead with a lethality radius of 10.4~m, which is a
$33.8\%$ performance improvement.  In comparison, the TV-DGLCC law
requires a warhead with a lethality radius of 8.5~m, corresponding to
$45.9\%$ and $18.3\%$ performance improvement relative to the DGL1 and
DGLC laws, respectively.
% The CDF curve of DGL1 fluctuates slightly due to the 0.1~s length
% between each evasion maneuver. Nevertheless, this artifact happens
% only for relatively small miss distances and does not affect the
% worst-case performance of DGL1.
%%
\begin{figure}[tbh]
  \centering
  \includegraphics[width=3.25in]{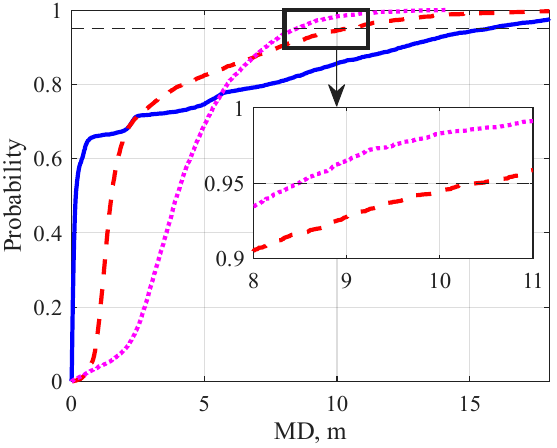}
  \caption{Miss distance CDF in a nonlinear MC simulation. Guidance
    laws: DGL1 (solid blue), DGLC (dashed red), and TV-DGLCC (dotted magenta).}
	\label{fig:CDF}
\end{figure}

\section{Conclusions}
\label{sec:concl}
This paper introduces a unified framework for tracking and
intercepting highly maneuverable evasive targets in a stochastic
setting. The main pillar of the new framework is an extended version
of the DGLCC law, which accounts for two time-varying estimation
delays, thus providing a principled way to incorporate realistic
estimator lags into the guidance framework. Two additional
contributions of the paper ensure that the guidance law is driven by
information consistent with its derivation. First, the estimation
delays, which are not constant throughout the engagement (contrary to
assumptions made in the relevant literature), are estimated online
based on a newly introduced method for estimating the uncertainty
interval following an abrupt target maneuver. Second, a particle-based
fixed-lag smoother is employed, that provides the guidance law with
appropriately delayed state estimates, using the estimated delays and
all available measurements.

Together, these elements form a coherent methodology that links
estimation, delay modeling, and guidance in a self-consistent way. By
emphasizing the structural compatibility between the law and its
supporting estimation processes, the approach lays the foundation for
more robust guidance strategies in interception problems where
stochastic effects and maneuver uncertainty cannot be ignored.  Monte
Carlo simulations demonstrate that this approach improves worst-case
interception performance compared to existing laws. In particular, the
time-varying DGLCC consistently reduces the lethality radius
requirements relative to both existing DGL1 and DGLC guidance laws.

\appendix
\numberwithin{equation}{section}
%\makeatletter 
% "activate" the preparatory code, but for section-level headers only
%\newcommand{\section@cntformat}{Appendix \thesection:\ }
\renewcommand{\theequation}{\thesection\arabic{equation}}

\section{Proof of Theorem~\ref{th:1}}
\label{append3}

The proof, which follows the approach
of~\cite{glizer_linear_2009}, proceeds in three stages. First, the
delayed-control functional
$\mathcal{F}(\bar{v}_{\tau}(\cdot),\tau)$ is bounded, establishing
the admissible range of an equivalent instantaneous controller.
Second, a normalized auxiliary game with instantaneous bounded
controls is formulated. Third, standard necessary conditions for
optimality are applied to the auxiliary game, and the resulting
optimal controls are mapped back to the original delayed-control
setting.

\medskip
\noindent\textit{Stage~1: Bounding the functional $\mathcal{F}$.}
Introducing a new (instantaneous) evader controller $V(\tau)
\triangleq \mathcal{F}(\bar{v}_{\tau}(\cdot),\tau)$, the dynamics of
the center of the uncertainty set, Eq.~\eqref{eq:z_cc_dyn}, become
\begin{equation}
  d \bar{z}_{cc}/d\tau = \mu \Psi(\tau)\, \bar{u}(\tau) + V(\tau).
  \label{eq:zcc_V}
\end{equation}
The admissible range of $V(\tau)$ is determined next. Applying the
change of variable $\sigma = s - \tau$ to the
functional~\eqref{eq:F_def} yields
\begin{align}
  \mathcal{F}(\bar{v}_{\tau}(\cdot),\tau)
  & = \int_{0}^{\Delta_{1}(\tau)}
      \bigl[\exp(-\sigma/\epsilon)-1\bigr]\,
      \bar{v}_{\tau}(\sigma)\,d\sigma \notag \\
  & \quad + \int_{\Delta_{1}(\tau)}^{\Delta_{2}(\tau)}
      \Bigl[1 - \exp\!\bigl(\Delta_{1}(\tau)/\epsilon\bigr)
      \bigl[\tau(1+\gamma_{1}(\tau))/\epsilon + 1\bigr]\Bigr]\,
      \exp(-\sigma/\epsilon)\,
      \bar{v}_{\tau}(\sigma)\,d\sigma \notag \\
  & \quad - \exp(-\Delta_{2}(\tau)/\epsilon)\,
      \bigl[\tau\bigl(\exp(\Delta_{1}(\tau)/\epsilon) - 1\bigr)
      + \epsilon\Psi(\tau/\epsilon)\bigr]\,
      \bar{v}_{\tau}(\Delta_{2}(\tau))\,(1 + \gamma_{2}(\tau)),
  \label{eq:F_transformed}
\end{align}
where the integrands and the
coefficient multiplying $\bar{v}_{\tau}(\Delta_{2}(\tau))$ are all
negative. Consequently, for any $\tau \in [0,\tau_{0}]$,
\begin{subequations}
  \label{eq:F_bounds}
  \begin{align}
    \max_{\abs{\bar{v}_{\tau}(\sigma)}\leq1,\;
          \sigma\in[0,\Delta_{2}(\tau)]}
    \mathcal{F}(\bar{v}_{\tau}(\cdot),\tau)
    &= \mathcal{F}(\bar{v}_{\tau}(\cdot),\tau)
       \big\vert_{\bar{v}_{\tau}(\sigma)\equiv-1}
    = A(\tau)
    \label{eq:F_max} \\
    \min_{\abs{\bar{v}_{\tau}(\sigma)}\leq1,\;
          \sigma\in[0,\Delta_{2}(\tau)]}
    \mathcal{F}(\bar{v}_{\tau}(\cdot),\tau)
    &= \mathcal{F}(\bar{v}_{\tau}(\cdot),\tau)
       \big\vert_{\bar{v}_{\tau}(\sigma)\equiv1}
    = -A(\tau)
    \label{eq:F_min}
  \end{align}
\end{subequations}
thus, $A(\tau) > 0$, as defined in~\eqref{eq:A_tau}, 
% \begin{align}
%   A(\tau) &\dfn \Delta_{1}(\tau) + \tau(1+\gamma_{1}(\tau))
%   + \epsilon\exp\!\bigl(-(\tau + \Delta_{2}(\tau))/\epsilon\bigr)\,
%     (1 + \gamma_{2}(\tau)) \notag \\
%   &\quad - \epsilon\exp(-\Delta_{2}(\tau)/\epsilon)\,\gamma_{2}(\tau)
%   + \exp\!\bigl((\Delta_{1}(\tau) - \Delta_{2}(\tau))/\epsilon\bigr)\,
%     \bigl[\tau(\gamma_{2}(\tau) - \gamma_{1}(\tau)) - \epsilon\bigr]
%   > 0
%   \label{eq:A_def}
% \end{align}
for all $\tau \in [0,\tau_{0}]$. It follows from~\eqref{eq:F_bounds}
that the instantaneous controller satisfies
\begin{equation}
  \abs{V(\tau)} \leq A(\tau), \quad \tau \in [0,\tau_{0}].
  \label{eq:V_bound}
\end{equation}

\medskip
\noindent\textit{Stage~2: Auxiliary game.}
Defining the normalized controller
$\bar{V}(\tau) \triangleq V(\tau)/A(\tau)$, the
bound~\eqref{eq:V_bound} becomes $|\bar{V}(\tau)| \leq 1$. An
auxiliary game is then formulated with the same cost
function~\eqref{eq:cost} and the dynamics
\begin{equation}
  d \bar{z}_{cc}/d\tau
  = \mu \Psi(\tau)\, \bar{u}(\tau) + A(\tau)\, \bar{V}(\tau)
  \label{eq:aux_dyn}
\end{equation}
subject to the control constraints
\begin{equation}
  \abs{\bar{u}(\tau)} \leq 1, \qquad
  \abs{\bar{V}(\tau)} \leq 1, \qquad
  \tau \in [0,\tau_{0}].
  \label{eq:aux_constraints}
\end{equation}

\medskip
\noindent\textit{Stage~3: Optimal controls.}
The auxiliary game~\eqref{eq:aux_dyn}--\eqref{eq:aux_constraints}
is a one-dimensional pursuit--evasion game with instantaneous bounded
controls. Applying the necessary conditions for optimality
from~\cite{shima_time-varying_2002,glizer_linear_2009} yields
\begin{subequations}
  \label{eq:aux_opt}
  \begin{align}
    \bar{u}^{*}(\tau)  &= \phantom{-}\sign \bar{z}_{cc}(0)
    \label{eq:u_opt_aux} \\
    \bar{V}^{*}(\tau) &= -\sign \bar{z}_{cc}(0).
    \label{eq:V_opt_aux}
  \end{align}
\end{subequations}
The mapping from $\bar{V}^{*}$ back to the original delayed control
$\bar{v}^{*}_{\tau}$ follows from the sign structure
of~\eqref{eq:F_transformed}. Because all integrands and coefficients
in~\eqref{eq:F_transformed} are negative, the functional
$\mathcal{F}$ reverses the sign of its input: $\bar{v}_{\tau} \equiv
c$ produces $\mathcal{F} = -c\,A(\tau)$.  Hence, the condition
$\bar{V}^{*}(\tau) = -\sign \bar{z}_{cc}(0)$ in~\eqref{eq:V_opt_aux}
is realized by
\begin{equation}
  \bar{v}^{*}_{\tau}(\sigma) = \sign \bar{z}_{cc}(0), \qquad
  \tau \in [0,\tau_{0}],\quad
  \sigma \in (0,\Delta_{2}(\tau)].
  \label{eq:v_delayed_opt}
\end{equation}
The optimal controls of the original game are, therefore,
\begin{equation}
  \bar{u}^{*}(\tau) = \bar{v}^{*}(\tau)
  = \sign \bar{z}_{cc}(0), \qquad \tau \in [0,\tau_{0}]
  \label{eq:opt_controls}
\end{equation}
and the optimal initial condition is
\begin{equation}
  \varphi^{*}(\zeta) = \bar{v}^{*}_{\tau_{0}}(\zeta)
  = \sign \bar{z}_{cc}(0), \qquad
  \zeta \in (0,\Delta_{2}(\tau_{0})].
  \label{eq:opt_ic}
\end{equation}
\qed

%%% Local Variables:
%%% mode: LaTeX
%%% TeX-master: "main"
%%% End:

\ifsupp
  \section{Perfect Information DGL1}
\label{sec:DGL1}
The underlying assumptions of this game are: 1) perfect information:
both players have complete information about the current state of the
game, and 2) there are no time delays. In this case, the game is
transformed into a one-dimensional state variable, the ZEM, which is
given by
\begin{equation}
	\bar{z}(\tau) = \bar{x}_{1}(\tau) + \tau \bar{x}_{2}(\tau) - \mu \Psi(\tau)\bar{x}_{3}(\tau) + \epsilon^{2} \Psi (\tau / \epsilon) \bar{x}_{4}(\tau)
	\label{eq:z_DGL1}
\end{equation}
where $\Psi$ is defined in~\eqref{eq:Psi}.
% where \YO{YO:  should be definition?}
% \begin{equation}
% 	\Psi(\tau) = \exp(-\tau) + \tau - 1.
% \end{equation}

The cost function is set to be the miss distance
\begin{equation}
	J = \abs{\bar{x}_{1}(0)} = \abs{\bar{z}(0)}
\end{equation}
and the optimal solution of this game is the well-known DGL1
law~\cite{gutman_optimal_1979,shinar_solution_1981}. The optimal
controls are
\begin{equation}
	\bar{u}^{*}(\tau) = \bar{v}^{*}(\tau) = \sign \bar{z}(\tau).
\end{equation}

If the pursuer possesses maneuverability and agility advantages, i.e.,
$\mu > 1$ and $\mu \epsilon > 1$, it can enforce a direct hit by
following its optimal strategy.  In this case we say that the pursuer
possesses a hit-to-kill capability.
  % \appendix
% \numberwithin{equation}{section}
% %\makeatletter 
% % "activate" the preparatory code, but for section-level headers only
% %\newcommand{\section@cntformat}{Appendix \thesection:\ }
% \renewcommand{\theequation}{\thesection\arabic{equation}}

\section{Derivation of \texorpdfstring{Eq.~\eqref{eq:z_cc}}{Eq.~(z\_cc)}}
\label{append1}

The uncertainty set is given by
\begin{equation}
    \bar{Z}(\tau) = [\bar{z}_{1}(\tau) + m(\tau),\;\bar{z}_{1}(\tau) + M(\tau)]
    \label{eq:Zset_app}
\end{equation}
where $m(\tau)$ and $M(\tau)$ are the solutions of the optimization
problems~\eqref{eq:opt_mM}. These optimization problems are subject
to the dynamic equations
\begin{subequations}
    \label{eq:dyn_app}
    \begin{align}
        d\bar{x}_{2}/d\sigma &= - \bar{x}_{4} + \mu \bar{x}_{3}
        \label{eq:norm_2} \\
        d\bar{x}_{4}/d\sigma &=  (\bar{x}_{4} - \bar{v}) / \epsilon 
        \label{eq:norm_4}
    \end{align}
\end{subequations}
with the delayed-state boundary conditions
\begin{subequations}
    \label{eq:delayed_bc}
    \begin{align}
        \bar{x}_{2}(\sigma) \big\vert_{\sigma = \tau + \Delta_{1}(\tau)} & = \bar{w}_{2}(\tau)
        \label{eq:w_2} \\
        \bar{x}_{4}(\sigma) \big\vert_{\sigma = \tau + \Delta_{2}(\tau)} & = \bar{w}_{4}(\tau).
        \label{eq:w_4}
    \end{align}
\end{subequations}

The derivation proceeds in three steps. First, the range of
admissible values for $\bar{x}_{4}$ at the intermediate time
$\tau + \Delta_{1}(\tau)$ is determined. Then, $\bar{z}_{2}(\tau)$ is
expressed as a function of these admissible values and the evader's
control on $[\tau,\,\tau + \Delta_{1}(\tau)]$. Finally, the resulting
expressions for $m(\tau)$ and $M(\tau)$ are combined to obtain the
center of the uncertainty set.

\medskip
\noindent\textit{Step~1: Admissible range of $\bar{x}_{4}(\tau + \Delta_{1}(\tau))$.}
Consider the evader's acceleration dynamics~\eqref{eq:norm_4} on the
interval $[\tau + \Delta_{1}(\tau),\,\tau + \Delta_{2}(\tau)]$, with
the boundary condition~\eqref{eq:w_4}. The extremal values of
$\bar{x}_{4}$ at $\sigma = \tau + \Delta_{1}(\tau)$ over all
admissible controls $|\bar{v}(\sigma)| \leq 1$ are
\begin{subequations}
    \label{eq:x4_minmax}
    \begin{align}
        \bar{x}_{4}^{\min}(\tau) &= \min_{\abs{\bar{v}(\sigma)}\leq1,\;\sigma\in[\tau + \Delta_{1}(\tau),\,\tau + \Delta_{2}(\tau)]} \bar{x}_{4}(\tau + \Delta_{1}(\tau)) \label{eq:x4min}\\
        \bar{x}_{4}^{\max}(\tau) &= \max_{\abs{\bar{v}(\sigma)}\leq1,\;\sigma\in[\tau + \Delta_{1}(\tau),\,\tau + \Delta_{2}(\tau)]} \bar{x}_{4}(\tau + \Delta_{1}(\tau)). \label{eq:x4max}
    \end{align}
\end{subequations}
Integrating~\eqref{eq:norm_4} with boundary condition~\eqref{eq:w_4}
from $\tau + \Delta_{2}(\tau)$ to $\tau + \Delta_{1}(\tau)$ yields
\begin{align}
    \bar{x}_{4}(\tau + \Delta_{1}(\tau)) = &\; \bar{w}_{4}(\tau)\exp\!\bigl((\Delta_{1}(\tau) - \Delta_{2}(\tau)) / \epsilon\bigr) \notag \\
    & + \frac{1}{\epsilon} \int_{\tau + \Delta_{1}(\tau)}^{\tau + \Delta_{2}(\tau)}
    \exp\!\bigl((\tau + \Delta_{1}(\tau) - s) / \epsilon\bigr)\, \bar{v}(s)\, ds.
    \label{eq:x4_integrated}
\end{align}
Because
$\exp\!\bigl((\tau + \Delta_{1}(\tau) - s)/\epsilon\bigr)/\epsilon$ is
positive for all
$s \in [\tau + \Delta_{1}(\tau),\,\tau + \Delta_{2}(\tau)]$, the
left-hand side of~\eqref{eq:x4_integrated} is minimized by
$\bar{v}(\sigma) \equiv -1$ and maximized by
$\bar{v}(\sigma) \equiv 1$. Substitution into~\eqref{eq:x4_integrated}
gives
\begin{subequations}
    \label{eq:x4_minmax_explicit}
    \begin{align}
        \bar{x}_{4}^{\min}(\tau) &= \exp\!\bigl((\Delta_{1}(\tau) - \Delta_{2}(\tau)) / \epsilon\bigr)\, (\bar{w}_{4}(\tau) + 1) - 1 \label{eq:x4min_val}\\
        \bar{x}_{4}^{\max}(\tau) &= \exp\!\bigl((\Delta_{1}(\tau) - \Delta_{2}(\tau)) / \epsilon\bigr)\, (\bar{w}_{4}(\tau) - 1) + 1. \label{eq:x4max_val}
    \end{align}
\end{subequations}

\medskip
\noindent\textit{Step~2: Expression for $\bar{z}_{2}(\tau)$.}
Let $\kappa$ denote the value of $\bar{x}_{4}$ at the intermediate
boundary,
\begin{equation}
    \kappa \triangleq \bar{x}_{4}(\sigma) \big\vert_{\sigma = \tau + \Delta_{1}(\tau)},
    \qquad
    \kappa \in [\bar{x}_{4}^{\min}(\tau),\;\bar{x}_{4}^{\max}(\tau)].
    \label{eq:kappa}
\end{equation}
With this notation, the minimization problem for $m(\tau)$
in~\eqref{eq:opt_mM} reduces to
\begin{equation}
    m(\tau) = \min_{\kappa \in [\bar{x}_{4}^{\min}(\tau) ,\, \bar{x}_{4}^{\max}(\tau)]} \;\min_{\abs{\bar{v}(\sigma)}\leq1,\;\sigma\in[\tau,\,\tau + \Delta_{1}(\tau)]} \bar{z}_{2}(\tau).
    \label{eq:m_reduced}
\end{equation}

Integrating~\eqref{eq:norm_4} with initial
condition~\eqref{eq:kappa} from $\tau + \Delta_{1}(\tau)$ backward to
an arbitrary $\sigma \in [\tau,\,\tau + \Delta_{1}(\tau))$ gives
\begin{equation}
    \bar{x}_{4}(\sigma) =  \kappa \exp\!\bigl((\sigma - \tau - \Delta_{1}(\tau)) / \epsilon\bigr) + \frac{1}{\epsilon} \int_{\sigma}^{\tau + \Delta_{1}(\tau)}
    \exp\!\bigl((\sigma - s) / \epsilon\bigr)\, \bar{v}(s)\, ds.
    \label{eq:x_4_append}
\end{equation}
Next, substituting~\eqref{eq:x_4_append} into the velocity
dynamics~\eqref{eq:norm_2} and integrating from
$\tau + \Delta_{1}(\tau)$ to
$\sigma \in [\tau,\,\tau + \Delta_{1}(\tau))$, with boundary
condition~\eqref{eq:w_2}, yields
\begin{align}
    \bar{x}_{2}(\sigma) =&\; \bar{w}_{2}(\tau) - \mu \int_{\tau}^{\tau + \Delta_{1}(\tau)} \bar{x}_{3}(s)\,ds + \kappa \epsilon \bigl(1 - \exp\!\bigl((\sigma - \tau - \Delta_{1}(\tau)) / \epsilon\bigr)\bigr) \notag\\
    & + \int_{\sigma}^{\tau + \Delta_{1}(\tau)} \bigl(1 - \exp\!\bigl((-s + \sigma)/\epsilon\bigr)\bigr)\, \bar{v}(s)\,ds.
    \label{eq:x2_intermediate}
\end{align}

Substituting~\eqref{eq:x2_intermediate}
and~\eqref{eq:x_4_append} into the definition of
$\bar{z}_{2}(\tau)$, Eq.~\eqref{eq:z_2}, and evaluating at
$\sigma = \tau$, produces
\begin{equation}
    \bar{z}_{2}(\tau) = a_{1}(\tau) + \kappa\, a_{2}(\tau) - \int_{\tau}^{\tau + \Delta_{1}(\tau)} a_{3}(\tau,s)\,\bar{v}(s)\,ds
    \label{eq:z2_linear}
\end{equation}
with the coefficient functions
\begin{subequations}
    \label{eq:a_coeffs}
    \begin{align}
        a_{1}(\tau) &= \tau\left( \bar{w}_{2}(\tau) - \mu \int_{\tau}^{\tau + \Delta_{1}(\tau)} \bar{x}_{3}(s)\,ds \right) \label{eq:a1}\\
        a_{2}(\tau) &= \epsilon \tau \bigl(1 - \exp(-\Delta_{1}(\tau)/\epsilon)\bigr) + \epsilon^{2}\Psi(\tau/\epsilon)\exp(-\Delta_{1}(\tau)/\epsilon) \label{eq:a2}\\
        a_{3}(\tau,s) &= \tau\bigl(1 - \exp((\tau-s)/\epsilon)\bigr) + \epsilon \Psi(\tau/\epsilon)\exp((\tau-s)/\epsilon). \label{eq:a3}
    \end{align}
\end{subequations}
It can be verified that $a_{2}(\tau) > 0$ for all $\tau > 0$, and
$a_{3}(\tau,s) > 0$ for all $\tau > 0$ and
$s \in [\tau,\,\tau + \Delta_{1}(\tau)]$.

\medskip
\noindent\textit{Step~3: Center of the uncertainty set.}
Equation~\eqref{eq:z2_linear} is linear in both $\kappa$ and
$\bar{v}(\cdot)$. Because $a_{2}(\tau) > 0$ and $a_{3}(\tau,s) > 0$,
the minimum of $\bar{z}_{2}(\tau)$ is attained at
$\kappa = \bar{x}_{4}^{\min}(\tau)$ with
$\bar{v}(\sigma) \equiv -1$ for
$\sigma \in [\tau,\,\tau + \Delta_{1}(\tau)]$, and the maximum is
attained at $\kappa = \bar{x}_{4}^{\max}(\tau)$ with
$\bar{v}(\sigma) \equiv 1$. Accordingly,
\begin{align}
    m(\tau) &= a_{1}(\tau) + \bar{x}_{4}^{\min}(\tau)\, a_{2}(\tau) - \int_{\tau}^{\tau + \Delta_{1}(\tau)} a_{3}(\tau,s)\,ds \label{eq:m_val}\\
    M(\tau) &= a_{1}(\tau) + \bar{x}_{4}^{\max}(\tau)\, a_{2}(\tau) + \int_{\tau}^{\tau + \Delta_{1}(\tau)} a_{3}(\tau,s)\,ds. \label{eq:M_val}
\end{align}

The center of the uncertainty set~\eqref{eq:Zset_app} is, therefore,
\begin{equation}
    \bar{z}_{cc}(\tau) = \bar{z}_{1}(\tau) + \frac{m(\tau) + M(\tau)}{2} = \bar{z}_{1}(\tau) + a_{1}(\tau) + \frac{1}{2}\, a_{2}(\tau) \bigl(\bar{x}_{4}^{\min}(\tau) + \bar{x}_{4}^{\max}(\tau)\bigr).
    \label{eq:zcc_intermediate}
\end{equation}
Substituting the explicit expressions for
$\bar{x}_{4}^{\min}(\tau)$ and $\bar{x}_{4}^{\max}(\tau)$
from~\eqref{eq:x4_minmax_explicit}, together with the definitions of
$a_{1}(\tau)$ and $a_{2}(\tau)$
from~\eqref{eq:a1}--\eqref{eq:a2}, into~\eqref{eq:zcc_intermediate}
yields Eq.~\eqref{eq:z_cc} after straightforward algebraic
simplification.

\section{Derivation of
  \texorpdfstring{Eq.~\eqref{eq:z_cc_star}}{Eq.~(z\_cc\_star)}}
\label{append2}

The evolution equation for $\bar{z}_{cc}(\tau)$ is obtained by
differentiating Eq.~\eqref{eq:z_cc} with respect to $\tau$, yielding
\begin{align}
    \frac{d \bar{z}_{cc}(\tau)}{ d \tau} &  = \mu \Psi(\tau) \bar{u}(\tau) - \bar{x}_{2}(\tau) + x_{2}(\tau + \Delta_{1}(\tau)) \notag \\
    - &\mu \int_{\tau}^{\tau + \Delta_{1}(\tau)} \bar{x}_{3}(s)\,ds
        - \tau \bar{x}_{4}(\tau + \Delta_{1}(\tau))  (1+\gamma_{1}(\tau)) \notag \\
    - &\exp(-\Delta_{2}(\tau)/\epsilon)\,[\tau \exp(\Delta_{1}(\tau)/\epsilon) + \epsilon\exp(-\tau/\epsilon) - \epsilon]\, \bar{v}(\tau + \Delta_{2}(\tau))\,(1 + \gamma_{2}(\tau)) \notag \\
    + &\exp(-\Delta_{2}(\tau)/\epsilon) \bigl[
    \exp(\Delta_{1}(\tau)/\epsilon)(\gamma_{1}(\tau)\tau + \epsilon + \tau) - \epsilon \bigr]\,
    \bar{x}_{4}(\tau + \Delta_{2}(\tau)).
    \label{eq:dzcc_raw}
\end{align}

Integrating the normalized dynamics~\eqref{eq:norm_2} from $\tau$ to
$\tau + \Delta_{1}(\tau)$ shows that
\begin{equation}
    x_{2}(\tau + \Delta_{1}(\tau)) - \bar{x}_{2}(\tau) - \mu \int_{\tau}^{\tau + \Delta_{1}(\tau)} \bar{x}_{3}(s)\,ds = - \int_{\tau}^{\tau + \Delta_{1}(\tau)} \bar{x}_{4}(s)\,ds.
    \label{eq:x2_identity}
\end{equation}
Substituting~\eqref{eq:x2_identity} into~\eqref{eq:dzcc_raw}
simplifies the first three terms, giving
\begin{align}
    \frac{d \bar{z}_{cc}(\tau)}{ d \tau} & = \mu \Psi(\tau) \bar{u}(\tau) - \int_{\tau}^{\tau + \Delta_{1}(\tau)} \bar{x}_{4}(s)\,ds
    - \tau \bar{x}_{4}(\tau + \Delta_{1}(\tau))  (1+\gamma_{1}(\tau))  \notag\\
    - &\exp(-\Delta_{2}(\tau)/\epsilon)\,[\tau \exp(\Delta_{1}(\tau)/\epsilon) + \epsilon\exp(-\tau/\epsilon) - \epsilon]\, \bar{v}(\tau + \Delta_{2}(\tau))\,(1 + \gamma_{2}(\tau))  \notag\\
    + &\exp(-\Delta_{2}(\tau)/\epsilon) \bigl[
        \exp(\Delta_{1}(\tau)/\epsilon)(\gamma_{1}(\tau)\tau + \epsilon + \tau) - \epsilon \bigr]\,
        \bar{x}_{4}(\tau + \Delta_{2}(\tau)).
    \label{eq:dzcc_step1}
\end{align}

Next, the integral of $\bar{x}_{4}$
in~\eqref{eq:dzcc_step1} is expanded using the
identity~\cite{glizer_linear_2009}
\begin{align}
    \int_{\tau}^{\tau + \Delta_{1}(\tau)} \bar{x}_{4}(s)\,ds & = \epsilon\, \bar{x}_{4}(\tau + \Delta_{1}(\tau))\, \bigl(1 - \exp(-\Delta_{1}(\tau)/\epsilon)\bigr)  \notag\\ 
    & \quad + \int_{\tau}^{\tau+\Delta_{1}(\tau)}\bigl(1-\exp((\tau-s)/\epsilon)\bigr)\,\bar{v}(s)\,ds
    \label{eq:x4_integral_identity}
\end{align}
which, upon substitution into~\eqref{eq:dzcc_step1}, yields
\begin{align}
    \frac{d \bar{z}_{cc}(\tau)}{ d \tau} & = \mu \Psi(\tau) \bar{u}(\tau) - \int_{\tau}^{\tau+\Delta_{1}(\tau)}\bigl(1-\exp((\tau-s)/\epsilon)\bigr)\,\bar{v}(s)\,ds \notag\\
    - &\bar{x}_{4}(\tau + \Delta_{1}(\tau))\bigl[ \epsilon\bigl(1 - \exp(-\Delta_{1}(\tau)/\epsilon)\bigr) + \tau(1+\gamma_{1}(\tau)) \bigr] \notag\\
    + &\exp(-\Delta_{2}(\tau)/\epsilon) \bigl[
    \exp(\Delta_{1}(\tau)/\epsilon)(\gamma_{1}(\tau)\tau + \epsilon + \tau) - \epsilon \bigr]\,
    \bar{x}_{4}(\tau + \Delta_{2}(\tau))  \notag\\
    - &\exp(-\Delta_{2}(\tau)/\epsilon)\,[\tau \exp(\Delta_{1}(\tau)/\epsilon) + \epsilon\exp(-\tau/\epsilon) - \epsilon]\, \bar{v}(\tau + \Delta_{2}(\tau))\,(1 + \gamma_{2}(\tau)).
    \label{eq:dzcc_step2}
\end{align}

Finally, the term $\bar{x}_{4}(\tau + \Delta_{1}(\tau))$
in~\eqref{eq:dzcc_step2} is eliminated by applying the
identity
\begin{align}
    \bar{x}_{4}(\tau + \Delta_{1}(\tau)) & = \bar{x}_{4}(\tau + \Delta_{2}(\tau))\, \exp\!\left(\frac{\Delta_{1}(\tau) - \Delta_{2}(\tau)}{\epsilon}\right)  \notag\\
    & \quad + \frac{1}{\epsilon} \int_{\tau + \Delta_{1}(\tau)}^{\tau + \Delta_{2}(\tau)} \exp\!\left(\frac{\tau + \Delta_{1}(\tau) - s}{\epsilon}\right) \bar{v}(s)\,ds
    \label{eq:x4_shift_identity}
\end{align}
which is obtained by integrating~\eqref{eq:norm_4} from
$\tau + \Delta_{2}(\tau)$ to $\tau + \Delta_{1}(\tau)$.
Substituting~\eqref{eq:x4_shift_identity}
into~\eqref{eq:dzcc_step2} and collecting terms produces
Eq.~\eqref{eq:z_cc_star}.

\fi

\section*{Acknowledgments}
\label{sec:aknow}
The authors would like to thank Prof.\ Valery Glizer and Prof.\
Vladimir Turetsky, who authored the original version of the DGLCC
guidance law, for their valuable insights during the discussion of
this paper.

\bibliography{Bib1}

\end{document}